\renewcommand{\Pr}{\text{\rm Pr\,}}
\renewcommand{\ln}{\textrm{ln\,}}
\DeclareMathOperator{\JD}{JD}
\DeclareMathOperator{\QJD}{QJD}
\DeclareMathOperator{\R}{\mathbb{R}}
\DeclareMathOperator{\dT}{d_T}
\DeclareMathOperator{\CM}{CM}
\newtheorem{theorem}{Theorem}
\newtheorem{proposition}[theorem]{Proposition}
\newtheorem{cor}[theorem]{Corollary}
\newtheorem{lemma}{Lemma}
\newtheorem{definition}{Definition}
\renewcommand{\qed}{\hfill{\rule{2mm}{2mm}}}
\renewenvironment{proof}[1][]{\begin{trivlist}
\item[\hspace{\labelsep}{\bf\noindent Proof#1:\/}] }{\qed\end{trivlist}}
\newcommand{\beqn}{\begin{equation*}}
\newcommand{\eeqn}{\end{equation*}}
\newcommand{\beqr}{\begin{eqnarray}}
\newcommand{\eeqr}{\end{eqnarray}}
\newcommand{\beq}{\begin{equation}}
\newcommand{\eeq}{\end{equation}}
\newcommand{\beqrn}{\begin{eqnarray*}}
\newcommand{\eeqrn}{\end{eqnarray*}}
\def\ket#1{\mathinner{|{#1}\rangle}}
\newcommand{\ketbra}[2]{|#1\rangle \! \langle #2|}
\newcommand{\Tr}{\mbox{\rm Tr}}
\newcommand{\densop}[1]{\mathcal{B}_+^1(\mathcal{H}_{#1})}
\begin{document}
\title{Properties of Classical and Quantum Jensen-Shannon Divergence}
\author{Jop Bri\"et}
\email{jop.briet@cwi.nl.}
\author{Peter Harremo\"es}
\email{P.Harremoes@cwi.nl.}
\affiliation{Centrum Wiskunde \& Informatica, Science Park 123, 1098 XG Amsterdam, The Netherlands}
% \author{Flemming Tops\o e}
% \email{topsoe@math.ku.dk.}
% \affiliation{Department of Mathematical Sciences, University of Copenhagen,
% Universitetsparken 5 DK-2100 Copenhagen, Denmark}
\date{\today}

\begin{abstract}
\emph{Jensen-Shannon divergence} ($\JD$) is a symmetrized and smoothed version
of the most important divergence measure of information theory, Kullback
divergence. As opposed to Kullback divergence it determines in a very direct
way a metric; indeed, it is the square of a metric. We consider a family of
divergence measures ($\JD_{\alpha}$ for $\alpha>0$), 
the \emph{Jensen divergences of order $\alpha$}, which generalize $\JD$ as $\JD_{1}=\JD$. Using
a result of Schoenberg, we prove that $\JD_{\alpha}$ is the square of a
metric for $\alpha\in\left(  0,2\right]  ,$ and that the resulting metric
space of probability distributions can be isometrically embedded in a real
Hilbert space. \emph{Quantum Jensen-Shannon divergence} ($\QJD$) is a
symmetrized and smoothed version of quantum relative entropy and can be
extended to a family of \emph{quantum Jensen divergences of order
$\alpha$} ($\QJD_{\alpha}$). We strengthen results by Lamberti et al. by
proving that for qubits and pure states, $\QJD_{\alpha}^{1/2}$ 
is a metric space which can be isometrically embedded in a real Hilbert space when
$\alpha\in\left(  0,2\right]  .$ In analogy with Burbea and Rao's
generalization of JD, we also define \emph{general} QJD by associating a
Jensen-type quantity to any weighted family of states. Appropriate
interpretations of quantities introduced are discussed and bounds are derived
in terms of the total variation and trace distance.

\end{abstract}

\pacs{89.70.Cf, 03.67.-a}
\maketitle

%\doublespace

%\noindent{\bf Keywords}\; Jensen-Shannon divergence, Jensen divergence, total variation, quantum Jensen-Shannon divergence, Kullback divergence, concave-convex optimization.

\section{Introduction}

For two probability distributions $P=(p_{1},\dots,p_{n})$ and $Q=(q_{1}%
,\dots,q_{n})$ on a finite alphabet of size $n\geq 2,$ Jensen-Shannon divergence
($\JD$) is a measure of divergence between $P$ and $Q$. It measures the
deviation between the Shannon entropy of the mixture $(P+Q)/2$ and the mixture
of the entropies, and is given by
\begin{equation}
\JD(P,Q)=H\left(  \frac{P+Q}{2}\right)  -\frac{1}{2}\big(H(P)+H(Q)\big).
\label{JDdef}%
\end{equation}
Attractive features of this function are that it is everywhere defined,
bounded, symmetric and only vanishes when $P=Q$. Endres and Schindelin
\cite{EndresSch03} proved that it is the square of a metric, which we call the
\emph{transmission metric} ($\dT$). This result implies, for example, that
Banach's fixed point theorem holds for the space of probability distributions
endowed with the metric $\dT.$
A natural way to extend Jensen-Shannon divergence is to consider a mixture of
$k$ probability distributions $P_{1},\dots,P_{k}$, with weights $\pi_{1}%
,\dots,\pi_{k}$, respectively. With $\pi=(\pi_{1},\dots,\pi_{k})$, we can then
define the \emph{general Jensen divergence} as
\[
\JD^{\pi}(P_{1},\dots,P_{k})=H\left(  \sum_{i=1}^{k}\pi_{i}P_{i}\right)
-\sum_{i=1}^{k}\pi_{i}H(P_{i}).
\]
This was already considered by Gallager \cite{Gallager1968} in 1968, who
proved that, for fixed $\pi$, this is a convex function in $(P_1,\cdots,P_k)$.
Further identities and inequalities
were derived by Lin and Wong \cite{LinWong90,Lin91}, and
Tops\o e~\cite{Topsoe00ine}. It has found a variety of important applications:
Sibson~\cite{Sibson1969} showed that it has applications in biology and
cluster analysis, Wong and You \cite{WongYou} used it as a measure of distance
between random graphs, and recently, Rosso et al. used it to quantify the
deterministic vs. the stochastic part of a time series \cite{rosso2007}. For
its statistical applications we refer to El-Yaniv et al. \cite{Yanivetal} and
references therein.

Burbea and Rao \cite{Burbea1982} introduced another level of generalization,
based on more general entropy functions. For an interval $I$ in $\R$ and a
function $\phi:I\rightarrow\R$, they define the $\phi$-\emph{entropy} of $x\in
I^{n}$ (where $I^{n}$ denotes the Cartesian product of $n$ copies of $I$) as
\[
H_{\phi}(x)=-\sum_{i=1}^{n}\phi(x_{i}).
\]
Based on this, they define the \emph{generalized mutual information measure}
as
\[
\JD_{\phi}^{\pi}(P_{1},\dots,P_{k})=H_{\phi}\left(  \sum_{i=1}^{k}\pi_{i}%
P_{i}\right)  -\sum_{i=1}^{k}\pi_{i}H_{\phi}(P_{i}),
\]
for which they established some strong convexity properties. If $k=2$,
$I=[0,1]$ and $\phi$ is the function $x\rightarrow\frac{1}{\alpha-1}%
(x^{\alpha}-x)$, then $H_{\phi}$ defines the \emph{entropy of order
$\alpha.$} In this case, Burbea and Rao proved that $\JD_{\phi}^{\pi}$ is convex for all $\pi$, if and
only if $\alpha\in\lbrack1,2]$, except if $n=2$ when convexity holds if and only if
$\alpha\in\lbrack1,2]$ or $\alpha\in\lbrack3,11/3]$.

We focus on the functions $\JD_{\phi}^{\pi}$, where $k\geq2$, $I = [0,1]$ and
$\phi$ defines entropy of order~$\alpha.$ For ease of notation we
write these as $\JD_{\alpha}^{\pi}$ if $k\geq2$ and as $\JD_{\alpha}$ 
if $k=2$ and $\pi=(1/2,1/2)$.

Shannon entropy is \emph{additive} in the sense that the entropy of
independent random variables, defined as the entropy of their joint
distribution, is the sum of their individual entropies. Like Shannon entropy R{\'e}nyi of order $\alpha$ entropy is additive but in general R{\'e}nyi entropy is not convex \cite{Renyi1961}. The power entropy of order $\alpha$ is a monotone function of R{\'e}nyi entropy but, contrary to R{\'e}nyi entropy it is a concave function which is what we are interested in. The study of power entropy dates back to J.H. Havrda and F. Charvat \cite{Havrda1967}. Since then it was rediscovered independently several times \cite{Lindhard1971,Aczel1975,Tsallis88}, but we have chosen the more neutral term entropy of order $\alpha$ rather than calling it Hravda-Chervat-Lindhardt-Nielsen-Acz{\'e}l-Dar{'o}czy-Tsallis entropy. Entropy of order $\alpha$ is not
additive (unless $\alpha=1$). This is one of the reasons why
this function is used by physicists in attempts to model long range interaction in 
statistical mechanics, cf. Tsallis \cite{Tsallis88} and followers (can
be traced from a bibliography maintained by Tsallis). 

Martins et al.
\cite{martins:measures,martins:nonextensive,Martins2008,Martins2008b} give
non-extensive (i.e. non-additive) generalizations of $\JD$ based on entropies of order $\alpha$ and an
extension of the concept of convexity to what they call \emph{$q$-convexity}.
For these functions they extend Burbea and Rao's results in terms of $q$-convexity.

Distance measures between quantum states, which generalize probability
distributions, are of great interest to the field of quantum information
theory~\cite{Wootters1981,Braunstein1994,Lee2003,mlmp:qjsd04,lamberti:qjsd}.
They play a central role in state discrimination and in quantifying entanglement.
For example, the quantum relative entropy of two states $\rho_{1}$ and $\rho_{2}$, given by
$S(\rho_{1}\Vert\rho_{2})=-\mbox{\rm Tr}\rho_{1}(\ln\rho_{1}-\ln\rho_{2})$,
is a commonly used distance measure. (For a review of its basic properties and
applications see \cite{ScWe01}). However, it is not symmetric and does not
obey the triangle inequality. As an alternative, Lamberti et al.
\cite{majtey:qjsd,lamberti:qjsd, lamberti:naturalmetric} proposed to use the
(classical) $\JD$ as a distance function for quantum states, but also
introduced a quantum version based on the von Neumann entropy, which we denote
by $\QJD$. Like its classical variant, it is everywhere defined, bounded,
symmetric and zero only when the inputs are two identical quantum states. They
prove that it is a metric on the set of pure quantum states and that it is
close to the Wootter's distance and its generalization introduced by
Braunstein and Caves \cite{Braunstein1994}. Whether the metric property holds
in general is unknown.

As an analogue to $\JD_{\alpha}^{\pi}$ for quantum states, we introduce the
\emph{general quantum Jensen divergence of order $\alpha$}
($\QJD_{\alpha}^{\pi}$). In the limit $\alpha\to1$ we obtain the ``von Neumann
version'':
\[
\QJD^{\pi}(\rho_{1},\dots,\rho_{k}) =S\left(  \sum_{i=1}^{k}\pi_{i}\rho
_{i}\right)  -\sum_{i=1}^{k}\pi_{i}S(\rho_{i}),
\]
where $S(\rho) = -\mbox{\rm Tr}\rho\ln\rho$ is the von Neumann entropy. For
$k=2$ and $\pi= (1/2,1/2)$ one obtains the \emph{quantum Jensen
divergence of order $\alpha$} ($\QJD_{\alpha}$), which generalizes $\QJD$ as
$\lim_{\alpha\to1}\QJD_{\alpha} = \QJD$.

%which we discuss in the context of the classical capacity of a quantum channel, and for a quantum generalization of Huffman encoding called \emph{indeterminate length quantum encoding}, introduced by Schumacher and Westmoreland~\cite{ScWe01}.
%\subsubsection{Interpretations of the general Quantum JSD.}

\subsubsection{Our results.}

We extend the results of Endres and Schindelin, concerning the metric property
of $\JD,$ and those of Lamberti et al., concerning the metric property of
$\QJD,$ as follows:

\begin{itemize}
\item Denoting the set of probability distributions on a set $X$ by $M_{+}%
^{1}(X),$ we prove that for $\alpha\in(0,2],$ the pair $\left(
M_{+}^{1}(X),\JD_{\alpha}^{1/2}\right)  $ is a metric space which can be
isometrically embedded in a real separable Hilbert space.

\item Denoting the set of quantum states on qubits (2-dimensional Hilbert
spaces) by $\mathcal{B}_{+}^{1}(\mathcal{H}_{2})$ and the set of pure-states
on $d$-dimensional Hilbert spaces by $\mathcal{P}(\mathcal{H}_{d}),$ we prove
that for $\alpha\in\left(  0,2\right]  ,$ the pairs $\left(  \mathcal{B}%
_{+}^{1}(\mathcal{H}_{2}),\QJD_{\alpha}^{1/2}\right)  $ and $\left(
\mathcal{P}(\mathcal{H}_{d}),\QJD_{\alpha}^{1/2}\right)  $ are metric spaces
which can be isometrically embedded in a real separable Hilbert space.

\item We show that these results do \emph{not} extend to the cases $\alpha
\in(2,3)$ and $\alpha\in(\frac{7}{2},\infty)$. More precisely, we show that,
for $\alpha\in(2,3)$, neither $\JD_{\alpha}$ nor $\QJD_{\alpha}$ can be the
square of a metric, and for $\alpha\in(\frac{7}{2},\infty)$, isometric
embedding in a real Hilbert space is impossible (though the metric property
may still hold).
\end{itemize}

\subsubsection{Techniques.}

To prove our positive results, we evoke a theorem by Schoenberg which links
Hilbert space-embeddability of a metric space $(X,d)$ to the property of
\emph{negative definiteness} (defined in Section \ref{metricsec}). We
prove that for $\alpha\in\left(  0,2\right]  ,$ $\JD_{\alpha}$ satisfies this
condition for every set of probability distributions, and that
$\QJD_{\alpha}$ satisfies this condition for every set of qubits or pure-states.

%Though we concentrate on the discrete
%case, well-known limiting properties imply that the results related to
%classical divergence -- and this includes Jensen-Shannon divergence -- carry
%over to the general case of probability measures on arbitrary Borel spaces.

\subsection{Interpretations of $\JD^{\pi}$ and $\QJD^{\pi}$}

\subsubsection{Channel capacity.}

A \emph{discrete memoryless channel} is a system with input and output
alphabets $X$ and~$Y$ respectively, and conditional probabilities $p(y|x)$ for
the probability that $y\in Y$ is received when $x\in X$ is sent. For a
discrete memoryless channel with $|X|=k$, input distribution $\pi$ over $X$
and conditional distributions $P_{x}(y) = p(y|x)$, we have that $\JD^{\pi
}\big(P_{x_{1}},\dots,P_{x_{k}}\big)$ in fact gives the transmission rate.
(See for example \cite{CovThom}.) Inspired by this fact, we call the metric
defined by the square root of $\JD$ the \emph{transmission metric} and denote
it by $\dT$.

A \emph{quantum channel} has classical input alphabet $X,$ and an encoding of
every element $x\in X$ into a quantum state $\rho_{x}$. A receiver decodes a
message by performing a measurement with~$|Y|$ possible outcomes, on the state
he or she obtained. For a quantum channel with $|X|=k$, input distribution
$\pi$ over $X,$ and encoded elements~$\rho_{x},$ Holevo's Theorem
\cite{Holevo73} says that the maximum transmission rate of classical
information (the classical channel capacity) is at most $\QJD^{\pi}\left(
\rho_{x_{1}},\dots,\rho_{x_{k}}\right)  $. Holevo~\cite{Holevo1998}, and
Schumacher and Westmoreland~\cite{Schumacher1997} proved that this bound is
also asymptotically achievable.

\subsubsection{Data compression and side information.}

Let $X=[k]$ be an input alphabet and for each $i\in X$ let $P_{i}$ be a
distribution over output alphabet $Y$ with $|Y|=n$. Consider a setting where a
sender uses a weighting $\pi$ over $X$, and a receiver who has to compress the
received output data losslessly. We call the receiver's knowledge of which
distribution $P_{i}$ is used at any time the \emph{side information}, and
difference between the average number of nats (units based on the natural
logarithm instead of bits) used for the encoding when the side information is
known, and when it is not known, the \emph{redundancy}. In \cite{FugTop04ISIT}%
, this setting is referred to as the \emph{switching model}.

If the receiver always knows which input distribution is used, then for each
distribution $P_{i}$, he or she can apply the optimal compression encoding
given by $H(P_{i})$. Hence, if the receiver has access to the side
information, the average number of nats, that the optimal compression encoding
uses is given by $\sum_{i=1}^{k}\pi_{i}H(P_{i}).$

However, if the receiver does not know when which input distribution is used,
he or she always has to use the same encoding. 
We say that a compression encoding
$C$ \emph{corresponds to} an input distribution~$Q$, if $C$ is optimal for $Q$
(i.e., the number of nats used is $H(Q)$). If the sender transmits an infinite sequence of
letters $y_{1}y_{2}\cdots$ , picked according to distribution $P_{i},$ and the
receiver compresses it using an encoding $C$ which corresponds to distribution
$Q$, then the average number of used nats 
is given by $\sum_{j=1}^{n}P_{i}%
(y_{j})\ln\frac{1}{Q(y_{j})}$.

Hence, with the weighting $\pi_{1},\dots,\pi_{k}$, we get the redundancy
\begin{align*}
R(Q):=  &  \sum_{i=1}^{k}\left(  \pi_{i}H(P_{i})-\sum_{j=1}^{n}\pi_{i}%
P_{i}(y_{j})\ln\frac{1}{Q(y_{j})}\right) \\
&  =\sum_{i=1}^{k}\pi_{i}D(P_{i}\Vert Q)
\,,
\end{align*}
a weighted average of Kullback divergences between the $P_{i}$'s and $Q$. The
{\it compensation identity} states that for $\overline{P}=\sum_{i=1}^{k}\pi_{i}%
P_{i}$, the equality
\begin{equation}
\sum_{i=1}^{k}\pi_{i}D(P_{i}\Vert Q)=\sum_{i=1}^{k}\pi_{i}D(P_{i}%
\Vert\overline{P})+D(\overline{P}\Vert Q)\, \label{eq:6}%
\end{equation}
holds for any distribution 
$Q$, cf. \cite{Topsoe67,Topsoe01ide}.

  It follows immediately that
$Q=\overline{P}$ is the unique argmin-distribution for $R(Q)$, and that
$\JD^{\pi}(P_{1},\dots,P_{k})$ is the corresponding minimum value.

Analogously in a quantum setting, let $X=[k]$ be an input alphabet, and for
each $i\in X$ let $\rho_{i}$ be a state on an output Hilbert space
$\mathcal{H}_{Y}$. We can think of a sender who uses the weighting $\pi$ of
distributions $X$, but a receiver who has to compress the states on
$\mathcal{H}_{Y}$ using as few \emph{qubits} as possible.

Schumacher \cite{schumacher:qcoding} showed that the mean number of qubits
necessary to encode a state $\rho_{i}$ is given by~$S(\rho_{i})$. Later,
Schumacher and Westmoreland \cite{Schumacher2001} introduced a quantum
encoding scheme, in which an encoding~$C_{Q}$ that is optimal (i.e., requires
the least number of qubits) for a state $\sigma$ requires on average
$S(\rho_{i})+S(\rho_{i}\Vert\sigma)$ qubits to encode $\rho_{i}$. Hence, when
the receiver uses $C_{Q}$ as the encoding, the mean redundancy is
$R(\sigma):=\sum_{i=1}^{k}\pi_{i}S(\rho_{i}\Vert\sigma)$. Let $\bar{\rho}%
=\sum_{i=1}^{k}\pi_{i}\rho_{i}$. The quantum analogue of \eqref{eq:6} is given
by Donald's identity \cite{Donald87}:
\[
\sum_{i=1}^{k}\pi_{i}S(\rho_{i}\Vert\sigma)=\sum_{i=1}^{k}\pi_{i}S(\rho
_{i}\Vert\bar{\rho})+S(\bar{\rho}\Vert\sigma),
\]
from which it follows that $\sigma=\bar{\rho}$ is the argmin-state that the
receiver should code for, and that $\QJD^{\pi}(\rho_{1},\dots,\rho_{k})$ is
the minimum redundancy.

\section{Preliminaries and notation}

In this section we fix notation to be used throughout the paper. We also
provide a concise overview of those concepts from quantum theory which we
need. For an extensive introduction we refer to \cite{Nielsen2000}.

\subsection{Classical information theoretic quantities}

We write $[n]$ for the set $\{1,2,\dots,n\}$. The set of probability
distributions supported by $\mathbb{N}$ is denoted by $M_{+}^{1}(\mathbb{N})$
and the set supported by $[n]$ is denoted by $M_{+}^{1}(n)$. We associate with
probability distributions $P,Q\in M_{+}^{1}(n)$ point probabilities
$(p_{1},\dots,p_{n})$ and $(q_{1},\dots,q_{n})$, respectively. Entropy
of order $\alpha\neq1$, Shannon entropy and Kullback divergence are given by
\[
S_{\alpha}\left(  P\right)  :=\frac{1-\sum_{i=1}^{n}p_{i}^{\alpha}}{\alpha
-1},
\]%
\begin{equation}
H(P):=-\sum_{i=1}^{n}p_{i}\ln p_{i} \label{entropy}%
\end{equation}
and
\begin{equation}
D(P\Vert Q):=\sum_{i=1}^{n}p_{i}\ln\frac{p_{i}}{q_{i}}, \label{kulleib}%
\end{equation}
respectively. Note that
$\lim_{\alpha\rightarrow1^{+}}S_{\alpha}(P)=H(P)$. For two-point probability
distributions $P=(p,1-p)$ we let $s_{\alpha}(p)$ denote
$S_{\alpha}(p,1-p)$.

\subsection{Quantum theory}

\subsubsection{States.}

The $d$-dimensional complex Hilbert space, denoted by $\mathcal{H}_{d}$, is
the space composed of all $d$-dimensional complex vectors, endowed with the
standard inner product. A physical system is mathematically represented by a
Hilbert space. Our knowledge about a physical system is expressed by its
\emph{state}, which in turn is represented by a \emph{density matrix} (a
trace-1 positive matrix) acting on the Hilbert space. The set of
density matrices on a Hilbert space $\mathcal{H}$ is denoted by $\mathcal{B}%
_{+}^{1}\left(  \mathcal{H}\right)$ \footnote{This deviates from the notation
used in \cite{lamberti:qjsd}. We do this for the sake of consistency with
regard to the notation for probability distributions.}. Rank-1 density
matrices are called \emph{pure-states}. Systems described by two-dimensional
Hilbert spaces are called \emph{qubits}. As the eigenvalues of a density
matrix are always positive real numbers that sum to one, a state can be
interpreted as a probability distribution over pure-states. Hence, sets of
states with a complete set of common eigenvectors can be interpreted as
probability distributions on the same set of pure-states. States thus
generalize probability distributions. This interpretation is not possible when
a common basis does not exist. Two states $\rho$ and $\sigma$ have a set of
common eigenvectors if and only if they commute; i.e. $\rho\sigma=\sigma\rho$.

\subsubsection{Measurements.}

Information about a physical system can be obtained by performing a
\emph{measurement} on its state. The most general measurement with $k$
outcomes is described by $k$ positive matrices $A_{1},\dots
,A_{k},$ which satisfy $\sum_{i=1}^{k}A_{i}=I$. This is a special case of the
more general concept of a \emph{positive operator valued measure} (POVM, see
for example \cite{Nielsen2000}). The probability that a measurement $A$ of a
system in state $\rho$ yields the $i$'th outcome is $\mbox{\rm Tr}(A_{i}\rho
)$. Hence, the measurement yields a random variable $A(\rho)$ with $\Pr
[A(\rho)=\lambda_{i}]=\mbox{\rm Tr}(A_{i}\rho)$. Naturally, the measurement
operators and quantum states should act on the same Hilbert space.

\subsection{Quantum information theoretic quantities}

\label{qinfoquants} For states $\rho,\sigma\in\mathcal{B}_{+}^{1}\left(
\mathcal{H}\right)  $, we use the quantum version of entropy of order $\alpha,$ von
Neumann entropy and quantum relative entropy, given by
\[
S_{\alpha}\left(  \rho\right)  :=\frac{1-\mbox{\rm Tr}\left(  \rho^{\alpha
}\right)  }{\alpha-1},
\]%
\begin{equation}
S(\rho):=-\mbox{\rm Tr}(\rho\ln\rho)\label{vnent}%
\end{equation}
and
\begin{equation}
S(\rho\Vert\sigma):=\mbox{\rm Tr}\rho\ln\rho-\mbox{\rm Tr}\rho\ln
\sigma,\label{relent}%
\end{equation}
respectively. Note that $\lim_{\alpha\rightarrow1^{+}}S_{\alpha}(\rho
)=S(\rho).$ We refer to \cite{Ohya1993} for a discussionof quantum relative entropy.

\section{Divergence measures}

\label{divmeasures:section}

\subsection{The general Jensen divergence}

Let us consider a mixture of $k$ probability distributions $P_{1},\dots,P_{k}$ with
weights $\pi_{1},\dots,\pi_{k}$ and let $\overline{P} = \sum_{i=1}^{k}\pi
_{i}P_{i}$. Jensen's inequality and concavity of Shannon entropy implies that
\[
H\left(  \sum_{i=1}^{k}\pi_{i}P_{i}\right)  \geq\sum_{i=1}^{k}\pi_{i}%
H(P_{i}).
\]
When entropies are finite, we can subtract the right-hand side from the
left-hand side and use this as a measure of how much Shannon entropy deviates
from being affine. This difference is called the \emph{general Jensen-Shannon
divergence} and we denote it by $\JD^{\pi}(P_{1},\dots,P_{k})$, where $\pi=
(\pi_{1},\dots,\pi_{k})$. One finds that
\begin{equation}
H\left(  \sum_{i-1}^{k}\pi_{i}P_{i}\right)  -\sum_{i=1}^{k}\pi_{i}
H(P_{i})=\sum_{i=1}^{k}\pi_{i}D(P_{i}\Vert\overline{P})\, \label{cjsd}%
\end{equation}
and therefore
\begin{equation}
\label{jsd}\JD^{\pi}(P_{1},\dots,P_{k}) =\sum_{i=1}^{k}\pi_{i}D(P_{i}%
\Vert\overline{P})\,.
\end{equation}
In the general case when entropies may be infinite the last expression can be
used, but we will focus on the situation where the distributions are over a
finite set and in this case we can use the left-hand side of (\ref{cjsd}).

Jensen divergence of order $\alpha$ is defined by the formula
\[
\JD_{\alpha}^{\pi}(P_{1},\dots,P_{k}) =S_{\alpha}\left(  \sum_{i=1}^{k}\pi
_{i}P_{i}\right)  -\sum_{i=1}^{k}\pi_{i}S_{\alpha} (P_{i}).
\]

Similarly, if $\rho_{1},\dots,\rho_{k}$ are states on a Hilbert space we
define
\begin{equation}
\QJD^{\pi}(\rho_{1},\dots,\rho_{k}) =\sum_{i=1}^{k}\pi_{i}S(\rho_{i}%
\Vert\overline{\rho}), \label{qjsd:defrel}%
\end{equation}
where $\overline{\rho}=\sum_{i=1}^{k}\pi_{i}\rho_{i}$. For states on a finite
dimensional Hilbert space we have
\[
\QJD^{\pi}(\rho_{1},\dots,\rho_{k}) =S\left(  \sum_{i=1}^{k}\pi_{i}\rho
_{i}\right)  -\sum_{i=1}^{k}\pi_{i}S(\rho_{i}).
\]
The quantum Jensen divergence of order $\alpha$ is defined by
\[
\QJD_{\alpha}^{\pi}(\rho_{1},\dots,\rho_{k}) =S_{\alpha}\left(  \sum_{i=1}%
^{k}\pi_{i}\rho_{i}\right)  -\sum_{i=1}^{k}\pi_{i}S_{\alpha}(\rho_{i}).
\]

\subsection{The Jensen divergence}

For even mixtures of two distributions, we introduce the notation
$\JD_{\alpha}(P,Q)$ for $\JD_{\alpha}(\frac{1}{2}P+\frac{1}{2}Q)$. That is,
\begin{equation}
\label{eq:8}\JD_{\alpha}(P,Q) := S_{\alpha}\left(  \frac{P+Q}{2}\right)  -
\frac{1}{2}S_{\alpha}(P) - \frac{1}{2}S_{\alpha}(Q).
\end{equation}
For even mixtures of two states the $\QJD$ was defined in~\cite{majtey:qjsd},
to which we refer for some of its basic properties. We consider the order $\alpha$
version of this and write $\QJD_{\alpha}(\rho,\sigma)$ for $\QJD_{\alpha
}(\frac{1}{2}\rho+ \frac{1}{2}\sigma)$. That is,
\begin{equation}
\label{qjsd}\QJD_{\alpha}(\rho,\sigma) := S_{\alpha}\left(  \frac{\rho+\sigma
}{2}\right)  - \frac{1}{2}S_{\alpha}(\rho) - \frac{1}{2}S_{\alpha}(\sigma).
\end{equation}

We refer to (\ref{eq:8}) and (\ref{qjsd}) simply as \textit{Jensen
divergence of order $\alpha$} ($\JD_{\alpha}$) and \emph{quantum
Jensen divergence of order $\alpha$} ($\QJD_{\alpha}$) respectively.

\section{Metric properties}

\label{metricsec}

\label{metricprops}

In this section we borrow most of the notational conventions and definitions
from Deza and Laurent \cite{Deza1997}. We refer to this book, to Berg, Christensen and Ressel \cite{Bergetal84}, and to Blumenthal
\cite{Blumenthal1953} for extensive introductions to the used results. Like Berg, Christensen and Ressel \cite{Bergetal84} we shall use the expressions ``positive and negative definite'' for what most textbook would call ``positive and negative semi-definite''.

\begin{definition}
\label{MetricDef}For a set $X$, a function $d:X\times X\rightarrow\mathbb{R}$
is called a \emph{distance} if for every $x,y\in X$:

\begin{enumerate}
\item $d(x,y)\geq0$ with equality if $x=y$.

\item $d$ is symmetric: $d(x,y)=d(y,x)$.

The pair $(X,d)$ is then called a \emph{distance space}. If in addition to 1
and 2, for every triple $x,y,z\in X$, the function $d$ satisfies
\end{enumerate}
\end{definition}

\begin{itemize}
\item[3.] $d(x,y)+d(x,z)\geq d(y,z)$ (the triangle inequality),

then $d$ is called a \emph{pseudometric} and $(X,d)$ a \emph{pseudometric
space}. If also, $d(x,y)=0$ holds if \emph{and only if} $x=y,$ then we speak
of a \emph{metric} and a \emph{metric space}.
\end{itemize}

Our techniques to prove our embeddability results for $\JD_{\alpha}$ and
$\QJD_{\alpha}$ are somewhat indirect. To provide some intuition, we briefly
mention the following facts. Only Definition \ref{MetricDef}, Proposition \ref{mengerprop} and Theorem
\ref{theorem:1} are needed for our proofs.

Work of Cayley and Menger gives a characterization of $\ell_{2}$ embeddability
of a distance space in terms of Cayley-Menger determinants. Given a finite
distance space $(X,d)$, the Cayley-Menger matrix $\CM(X,d)$ is given in terms
of the matrix $D_{ij} = d(x_{i},x_{j})$, for $x_{i},x_{j}\in X$, and the
all-ones vector~$e$:
\begin{align*}
\CM(X,d):=%
\begin{pmatrix}
D & e\\
e^{T} & 0
\end{pmatrix}
.
\end{align*}

Menger proved the following relation between $\ell_{2}$ embeddability and the
determinant of $\CM(X,d)$.

\begin{proposition}
[\cite{Menger1954}]\label{mengerprop} Let $(X,d)$ be a finite distance space.
Then $(X,d^{1/2})$ is $\ell_{2}$ embeddable if and only if for every
$Y\subseteq X$, we have $(-1)^{|Y|}\det\CM(Y,d)\geq0$.
\end{proposition}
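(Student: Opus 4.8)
The plan is to prove Menger's characterization by relating the Cayley-Menger determinant to the Gram matrix of an $\ell_2$ embedding, and then to use the observation that $\ell_2$ embeddability of a finite metric is a \emph{local} condition (inherited from all subsets only through the nested determinant signs). First I would set up the reduction from the squared-distance data to centered coordinates. Suppose $(Y, d^{1/2})$ embeds isometrically in a real Hilbert space via points $v_1, \dots, v_m$, so that $d(x_i, x_j) = \|v_i - v_j\|^2$. Fixing a basepoint (say $v_m$, or the centroid) and writing $w_i = v_i - v_m$, one has the standard polarization identity $\langle w_i, w_j \rangle = \tfrac{1}{2}\big(d(x_i,x_m) + d(x_j,x_m) - d(x_i,x_j)\big)$, so the Gram matrix $G = (\langle w_i, w_j\rangle)_{i,j < m}$ is obtained from $D$ by a fixed linear transformation. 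Embeddability in $\ell_2$ is then equivalent to $G$ being positive semidefinite, which is the bridge I would exploit.

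The key computational step is to show that, up to an explicit nonzero constant and sign, the principal minors of $G$ equal the Cayley-Menger determinants $\det \CM(Y,d)$. Concretely, I would perform elementary row and column operations on the bordered matrix $\CM(Y,d) = \smatrix{D & e \\ e^T & 0}$: subtracting a chosen row and column (the one indexed by the basepoint) turns the $D$-block into something whose nontrivial part is $-2G$, while the border rows track the affine constraint $\sum \lambda_i = 0$ implicit in centering. A careful accounting of these operations yields a relation of the schematic form $\det \CM(Y,d) = (-1)^{|Y|}\, 2^{\,|Y|-1}\det G_Y$, where $G_Y$ is the Gram matrix built from the points of $Y$. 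Hence $(-1)^{|Y|}\det\CM(Y,d) = 2^{|Y|-1}\det G_Y \geq 0$ \emph{precisely when} $\det G_Y \geq 0$.

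Finally I would assemble the equivalence. For the forward direction: if $(Y,d^{1/2})$ is $\ell_2$ embeddable, then every Gram matrix $G_Y$ is positive semidefinite, so all its principal minors are nonnegative, giving $(-1)^{|Y'|}\det\CM(Y',d) \geq 0$ for every $Y' \subseteq Y$. For the converse: if the sign condition holds for all subsets, I would invoke the classical fact that a symmetric matrix all of whose principal minors are nonnegative is positive semidefinite---this is exactly where running over \emph{all} subsets $Y$ (not just $Y = X$) is essential, since a single nonnegative determinant does not force positive semidefiniteness. Once $G$ is shown to be positive semidefinite, it factors as $G = W^T W$, and the columns of $W$ (together with the basepoint at the origin) furnish the desired points in a real Hilbert space.

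The main obstacle I anticipate is the bookkeeping in the second step: getting the power of $2$, the sign $(-1)^{|Y|}$, and the effect of the bordering vector $e$ correct simultaneously. The bordered structure means the row/column operations interact with the all-ones border in a way that is easy to misattribute, and one must be careful that the operations used are genuinely volume-preserving (or that their Jacobian is tracked). A clean way to avoid sign errors is to compute $\det\CM(Y,d)$ directly on the embedded points by substituting $d(x_i,x_j) = \|v_i\|^2 - 2\langle v_i, v_j\rangle + \|v_j\|^2$ and factoring the resulting matrix as a product of a structured matrix with its transpose, reading off the determinant from that factorization rather than manipulating $\CM$ abstractly.
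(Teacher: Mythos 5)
The paper itself gives no proof of this proposition---it is imported from Menger's book by citation, and only illustrated for $|X|=3$ via the Heron-formula display \eqref{CMHeron}---so there is no internal argument to compare against; judged on its own, your proposal is correct and is the standard modern proof of Menger's criterion. Your pivotal identity checks out exactly as you schematized it: subtracting the basepoint row and column of $\CM(Y,d)$ turns the distance block into $-2G_Y$ (by the polarization formula, since the diagonal of $D$ vanishes) while the border collapses, and Laplace expansion along the border gives $\det\CM(Y,d)=(-1)^{|Y|}\,2^{|Y|-1}\det G_Y$ as a \emph{polynomial identity} in the entries of $D$, valid whether or not any embedding exists---which is precisely what your converse direction requires. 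The constant and sign are right; a quick sanity check is the equilateral triangle with all squared distances $1$, where $\CM$ is the all-ones matrix minus the identity, so $\det\CM=-3=-16\cdot(\mathrm{area})^2$, matching $(-1)^3 2^2\det G_Y$ with $\det G_Y=3/4$. (Incidentally, this specialization shows that the factorization displayed in \eqref{CMHeron} of the paper carries an inadvertent overall sign flip: the correct triangle identity is $-\det\CM=16\,(\mathrm{area})^2=(a+b+c)(-a+b+c)(a-b+c)(a+b-c)$, so your flagged worry about sign bookkeeping was well placed and your version is the right one.) Your two structural observations are also exactly where the content lies: the principal minor of $G$ indexed by $S\subseteq\{1,\dots,m-1\}$ is the Gram matrix of the subset $Y=S\cup\{x_m\}$, so quantifying over all $Y$ containing the basepoint delivers \emph{all} principal minors of $G$, and one genuinely needs the fact that a symmetric matrix with all (not merely leading) principal minors nonnegative is positive semidefinite---the matrix with diagonal $(0,-1)$ shows leading minors alone do not suffice, so the subset quantifier cannot be weakened. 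One small caution on your closing remark: the ``compute on embedded points and factor'' shortcut establishes the determinant identity a priori only on the cone of embeddable $D$'s, so if you used it alone you would need an extra density argument to extend the identity to arbitrary distance matrices (the polarization map is a linear bijection onto zero-diagonal symmetric matrices and the embeddable $D$'s contain an open set, so the extension does work); your primary row-operation derivation avoids this issue entirely and is the one to keep for the converse.
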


As an example, consider a distance space with $|X|=3$. If we set
$a:=d(x_{1},x_{2})^{1/2}$, $b:=d(x_{1},x_{3})^{1/2}$ and $d(x_{2},x_{3}%
)^{1/2}$, then we obtain
\begin{multline}
-\det\CM(X,d)=\label{CMHeron}\\
(a+b+c)(a-b-c)(-a+b-c)(-a-b+c).%\nonumber
\end{multline}
On the one hand, this at least zero if $d$ is a pseudometric, and hence
pseudometric spaces on three points are $\ell_{2}$ embeddable.
On the other
hand, up to a factor $1/16$, the right-hand-side of \eqref{CMHeron} is the
square of Heron's formula for the area of a triangle with edge-lengths $a$,
$b$ and $c$. In general, Cayley-Menger determinants give the formulas needed
to calculate the squared hypervolumes of higher dimensional simplices.
Menger's result can thus be interpreted as saying that a distance space
$(X,d^{1/2})$ is $\ell_{2}$ embeddable if and only if every subset is a
simplex with real hypervolume.

Returning to our example with $|X|=3$, we also have the following implication.

\begin{proposition}
\label{schoenheronprop} \label{pro:4} Let $(\{x_{1},x_{1},x_{3}\},d)$ be a
distance space. Assume that for every $c_{1},c_{2},c_{3}\in\mathbb{R}$ such
that $c_{1}+c_{2}+c_{3}=0$, the distance function $d$ satisfies
\begin{equation}
\sum_{i,j}c_{i}c_{j}d(x_{i},x_{j})\leq0,\label{threensd}%
\end{equation}
where the summation is over all pairs $i,j\in\{1,2,3\}$. Then $(\{x_{1}%
,x_{1},x_{3}\},d^{1/2})$ is $\ell_{2}$ embeddable.
\end{proposition}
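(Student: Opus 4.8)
The plan is to prove $\ell_2$ embeddability directly, by constructing an explicit embedding $\phi\colon\{x_1,x_2,x_3\}\to\mathbb{R}^2$ with $\|\phi(x_i)-\phi(x_j)\|^2 = d(x_i,x_j)$, rather than verifying the Cayley--Menger inequalities of Proposition~\ref{mengerprop}. The idea, which is the three-point specialization of Schoenberg's argument, is to treat the numbers $d(x_i,x_j)$ as candidate \emph{squared} distances and recover the associated Gram matrix by polarization. Concretely, I would single out $x_1$ as a base point and define the $2\times2$ symmetric matrix $G$ with entries
\[
G_{ij}=\tfrac12\big(d(x_1,x_i)+d(x_1,x_j)-d(x_i,x_j)\big),\qquad i,j\in\{2,3\}.
\]
If $G$ turns out to be positive semidefinite, then it is a genuine Gram matrix, so there exist vectors $v_2,v_3\in\mathbb{R}^2$ with $G_{ij}=\langle v_i,v_j\rangle$; setting $\phi(x_1)=0$ and $\phi(x_i)=v_i$ yields $\|\phi(x_i)-\phi(x_j)\|^2 = G_{ii}+G_{jj}-2G_{ij}=d(x_i,x_j)$ and $\|\phi(x_i)-\phi(x_1)\|^2=G_{ii}=d(x_1,x_i)$, which is exactly the desired isometric embedding into $\ell_2$.

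So the whole proof reduces to the single step of showing that hypothesis~\eqref{threensd} forces $G\succeq0$. I would verify this by a direct substitution: given arbitrary $c_2,c_3\in\mathbb{R}$, set $c_1=-(c_2+c_3)$ so that $(c_1,c_2,c_3)$ sums to zero, and expand $\sum_{i,j}c_ic_j d(x_i,x_j)$ using $d(x_i,x_i)=0$. Regrouping, one obtains the clean identity
\[
\sum_{i,j}c_ic_j\,d(x_i,x_j)=-2\sum_{i,j\in\{2,3\}}c_ic_j\,G_{ij},
\]
where one uses $G_{22}=d(x_1,x_2)$, $G_{33}=d(x_1,x_3)$ and $G_{23}=\tfrac12\big(d(x_1,x_2)+d(x_1,x_3)-d(x_2,x_3)\big)$. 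Since \eqref{threensd} makes the left-hand side $\leq0$ for every zero-sum triple, and since $c_2,c_3$ range freely, the right-hand side shows that $\sum_{i,j\in\{2,3\}}c_ic_jG_{ij}\geq0$ for all $(c_2,c_3)$, i.e.\ that $G$ is positive semidefinite.

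The main, and essentially only, obstacle is this bookkeeping identity; once it is in place the rest is routine linear algebra, and positivity of the diagonal entries $G_{ii}=d(x_1,x_i)\geq0$ is automatic from $d$ being a distance. As a consistency check one can reconcile the two viewpoints: the same base-point reduction gives $-\det\CM(\{x_1,x_2,x_3\},d)=4\det G$, while the remaining Menger conditions for subsets of size $1$ and $2$ reduce to $1\geq0$ and $2\,d(x_i,x_j)\geq0$, so Proposition~\ref{mengerprop} recovers the identical conclusion. I would nonetheless present the Gram-matrix route, since it exhibits the embedding explicitly and makes transparent why the negative-definiteness condition \eqref{threensd} is the right hypothesis.
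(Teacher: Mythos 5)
Your proof is correct, and it takes a genuinely different route from the paper's. The paper argues through Proposition~\ref{mengerprop}: it substitutes the one-parameter family $c_{1}=1$, $c_{2}=t$, $c_{3}=-t-1$ into \eqref{threensd}, notes that the resulting quadratic in $t$ is nonpositive for all $t$ precisely when a discriminant condition holds, and identifies that condition with the nonnegativity of the three-point Cayley--Menger expression \eqref{CMHeron}, so that Menger's criterion delivers embeddability. You instead bypass Menger entirely and build the embedding: polarizing at the base point $x_1$ to form $G_{ij}=\tfrac12\big(d(x_1,x_i)+d(x_1,x_j)-d(x_i,x_j)\big)$, verifying the identity $\sum_{i,j}c_ic_j\,d(x_i,x_j)=-2\sum_{i,j\in\{2,3\}}c_ic_j\,G_{ij}$ under $c_1=-(c_2+c_3)$ (which checks out, using only $d(x_i,x_i)=0$ and symmetry), concluding $G\succeq0$, and factoring $G$ as a Gram matrix. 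This is exactly the three-point instance of Schoenberg's proof of Theorem~\ref{theorem:1}, so your argument scales verbatim to any number of points, exhibits the isometry explicitly, and makes transparent why \eqref{threensd} is the right hypothesis; what the paper's route buys instead is the link to Heron's formula and simplex volumes that the surrounding discussion of Cayley--Menger determinants is meant to illustrate. One further point in your favor: your consistency check $-\det\CM=4\det G$ is correct (for the equilateral case $a=b=c=1$ both sides equal $3$, and in general both equal $16$ times the squared triangle area), and it actually reveals that the product displayed in \eqref{CMHeron} carries the wrong sign, since $(a-b-c)(-a+b-c)(-a-b+c)=-(-a+b+c)(a-b+c)(a+b-c)$, so the right-hand side as printed is negative for a nondegenerate triangle; your Gram-matrix computation gets the sign right.
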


\begin{proof}
Let $a:=d(x_1,x_2)^{1/2}$, $b:=d(x_1,x_3)^{1/2}$ and $c:=d(x_2,x_3)^{1/2}$.
%If $d$ is indeed the square of a metric, then by the triangle inequality we have $a+b-c\geq 0$, $a-b+c\geq 0$ and $-a+b+c\geq 0$. Notice that two of these inequalities are always automatically satisfied for any triple $(a,b,c)$ of real numbers.
%Heron's formula says that a triangle with edges of lengths $a'$, $b'$ and $c'$, has an area $A$ satisfying
%\beqr
%16A^2 &=& (a'+b' + c')(a'+b' - c')(a'-b' + c')(-a'+b' + c')\nonumber\\
%&=& -a'^2 - b'^2 - c'^2 + 2a'b' + 2a'c' + 2b'c'\label{Heron}
%\eeqr
%Hence, if we interpret $a$, $b$ and $c$ as the lengths of the edges of a triangle and get that Heron's formula gives a non-imaginary area, then $K^{1/2}$ satisfies the triangle inequality.
We first show that \eqref{threensd} implies that~\eqref{CMHeron} is nonnegative. To this end, set $c_{1}=1$, $c_{2}=t$,
$c_{3}=-t-1$ where $t$ is a real parameter. Then, if~\eqref{threensd} holds, we get the inequality
\[
a^2t+ b^2t(-t-1)+c^2(-t-1)\leq0\,.
\]
The nonnegativity of \eqref{CMHeron} follows from the fact that this inequality holds if and only if the discriminant of this second order polynomial is at least zero. The result now follows from Proposition \ref{mengerprop}.
\end{proof}

%% First note that a triple $(\alpha,\beta,\gamma)$ of non-negative
%% numbers are \textit{squared distances} if $\alpha^{1/2}\leq\beta^{{1/2}%
%% }+\gamma^{{1/2}}$, $\beta^{{1/2}}\leq\gamma^{{1/2}}+\alpha^{{1/2}}$ and
%% $\gamma^{{1/2}}\leq\alpha^{{1/2}}+\beta^{{1/2}}$. Assuming that $\gamma
%% =\max\{\alpha,\beta,\gamma\}$, this amounts to the condition $\gamma^{{1/2}%
%% }\leq\alpha^{{1/2}}+\beta^{{1/2}}$ which, after some simple algebra can be
%% transformed to the equivalent inequality
%% \begin{equation}
%% \alpha^{2}+\beta^{2}+\gamma^{2}\leq2\alpha\beta+2\beta\gamma+2\gamma
%% \alpha\,.\label{eq:15}%
%% \end{equation}

%This example serves as motivation for the following definition.

The basis of our positive results in this section is that, due to Schoenberg
\cite{Schoenberg1935,Schoen38}, a more general version of Proposition
\ref{schoenheronprop} also holds. To state it concisely, we first define
\emph{negative definiteness}.

\begin{definition}
[Negative definiteness]Let $(X,d)$ be a distance space. Then $d$ is said
to be \emph{negative defninite} if and only if for all finite sets
$(c_{i})_{i\leq n}$ of real numbers such that $\sum_{i=1}^{n}c_{i}=0$, and all
corresponding finite sets $(x_{i})_{i\leq n}$ of points in $X$, it holds that
\begin{equation}
\sum_{i,j}c_{i}c_{j}d(x_{i},x_{j})\leq0. \label{eq:12}%
\end{equation}
In this case, $(X,d)$ is said to be a distance space of \emph{negative type}.
\end{definition}

The following theorem follows as a corollary of Schoenberg's theorem.

\begin{theorem}
\label{theorem:1} Let $(X,d)$ be a distance space. Then $\left(
X,d^{1/2}\right)  $ can be isometrically embedded in a real separable Hilbert
space if and only if $(X,d)$ is of negative type.
\end{theorem}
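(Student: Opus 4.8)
The plan is to deduce Theorem \ref{theorem:1} as a corollary of Schoenberg's classical theorem, which states that a distance space $(X,d)$ is of negative type if and only if $(X,d^{1/2})$ embeds isometrically into a real Hilbert space; the separability of the target space then needs to be addressed separately. First I would recall the precise statement of Schoenberg's result in the form proved in \cite{Schoenberg1935,Schoen38} (and collected in \cite{Bergetal84,Deza1997}): the function $d$ is negative definite in the sense of \eqref{eq:12} exactly when there exists a map $f:X\to\mathcal{H}$ into some real Hilbert space with $\|f(x)-f(y)\|^{2}=d(x,y)$ for all $x,y\in X$. This immediately gives the ``if and only if'' for the isometric-embedding part, since $d^{1/2}(x,y)=\|f(x)-f(y)\|$ is precisely the statement that $f$ is an isometry from $(X,d^{1/2})$ onto its image in $\mathcal{H}$.

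The remaining work is to justify that the Hilbert space can be taken to be \emph{separable}. The key observation is that the image $f(X)$ need not be countable, but the closed linear span $\overline{\operatorname{span}}\,f(X)$ is the only part of $\mathcal{H}$ that matters for the embedding, and for the applications in this paper the relevant sets $X$ are themselves separable metric spaces. I would argue as follows: replace $\mathcal{H}$ by the closed subspace generated by the differences $f(x)-f(x_{0})$ for a fixed basepoint $x_{0}\in X$; an isometric embedding into this subspace is still an isometric embedding into a Hilbert space, and when $X$ is separable the continuity of $f$ (which follows from the isometry property together with separability of $(X,d^{1/2})$) ensures that this span is separable. For the fully general statement one invokes the standard fact that every metric space embeds isometrically into the Hilbert space constructed via the Schoenberg kernel, and that one may pass to the closed span of a dense subset.

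The step I expect to be the main obstacle is reconciling the classical formulation of Schoenberg's theorem, which is usually stated for genuine metrics or for arbitrary symmetric kernels vanishing on the diagonal, with the slightly weaker hypotheses of a \emph{distance space} as defined in Definition \ref{MetricDef}, where only nonnegativity and symmetry (not the triangle inequality) are assumed. I would verify that negative definiteness \eqref{eq:12}, together with $d(x,x)=0$ and symmetry, is exactly the hypothesis under which Schoenberg's construction produces a well-defined positive-definite kernel $K(x,y)=\tfrac12\big(d(x,x_{0})+d(y,x_{0})-d(x,y)\big)$; negative definiteness of $d$ is equivalent to positive definiteness of $K$, and the Moore--Aronszajn construction of the reproducing-kernel Hilbert space associated to $K$ then yields the desired embedding. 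Once this kernel computation is in place, the equivalence and the separability claim both follow, and the theorem is a direct corollary of Schoenberg's theorem exactly as asserted.
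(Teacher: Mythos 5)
Your route is exactly the paper's: the paper offers no proof of Theorem \ref{theorem:1} beyond the remark that it ``follows as a corollary of Schoenberg's theorem'' \cite{Schoenberg1935,Schoen38}, and your fleshing-out of that citation is correct in its core. The kernel computation works as you say: with $d(x,x)=0$ and symmetry, the map $f(x)=K(\cdot,x)$ into the reproducing-kernel Hilbert space of $K(x,y)=\tfrac12\big(d(x,x_0)+d(y,x_0)-d(x,y)\big)$ satisfies $\|f(x)-f(y)\|^2=K(x,x)+K(y,y)-2K(x,y)=d(x,y)$, and negative definiteness of $d$ against mean-zero coefficient vectors is equivalent to positive definiteness of $K$. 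Your worry about reconciling the distance-space hypotheses with Schoenberg's is also resolved the way you suggest; note in addition that nonnegativity of $d$ is automatic from negative definiteness (take $c_1=1$, $c_2=-1$ at two points to get $-2d(x,y)\leq 0$), so nothing is lost relative to the classical formulation.

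The one genuine flaw is your handling of separability for general $X$. Your argument for separable $(X,d^{1/2})$ is fine (an isometry carries a countable dense set to a countable dense set, so the closed span of the image is separable), but the fallback for the fully general case --- ``one may pass to the closed span of a dense subset'' --- only helps when that dense subset is countable, which is again separability; and the ``standard fact'' you invoke is not one. Indeed no argument can close this: the statement with ``separable'' is false for non-separable $X$. Take $X$ uncountable with $d(x,y)=1$ for $x\neq y$; then $\sum_{i,j}c_ic_j\,d(x_i,x_j)=\big(\sum_i c_i\big)^2-\sum_i c_i^2=-\sum_i c_i^2\leq 0$ whenever $\sum_i c_i=0$, so $(X,d)$ is of negative type, yet an uncountable set of points at pairwise distance $1$ cannot embed in a separable Hilbert space (the balls of radius $1/2$ about the images are pairwise disjoint, and each would have to meet a countable dense set). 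So Theorem \ref{theorem:1} as stated tacitly assumes $(X,d^{1/2})$ separable --- harmless for the paper, since $M_{+}^{1}(\mathbb{N})$, $\mathcal{B}_{+}^{1}(\mathcal{H}_{2})$ and $\mathcal{P}(\mathcal{H}_{d})$ are separable in the induced topologies, exactly as you observe --- but your claimed rescue of the unrestricted statement is a gap, one the paper shares rather than resolves.
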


%Without recourse to the rather deep theorem of Schoenberg, one can establish
%the central metric property directly from negative definiteness. This is
%convenient especially for readers with a basic interest in applications as the
%metric property is the key fact of interest. In order to short-cut the theorem
%of Schoenberg, one needs a simple observation: If one relaxes the basic
%requirement of negative definiteness, only considering triples of $c^{\prime}%
%$s and $x^{\prime}$s, the metric property still follows:

Note that if isometric embedding in a Hilbert space is possible, then the
space must be a metric space. We define \emph{positive definiteness} as follows.

\begin{definition}
[Positive definiteness]Let $X$ be a set and $f:X\times X\rightarrow
\mathbb{R}$ a mapping. Then $f$ is said to be \emph{positive definite} if
and only if for all finite sets $(c_{i})_{i\leq n}$ of real numbers and all
corresponding finite sets $(x_{i})_{i\leq n}$ of points in $X$, it holds that
\begin{equation}
\sum_{i,j}c_{i}c_{j}f(x_{i},x_{j})\geq0. \label{psddef}%
\end{equation}

\end{definition}

Because we are concerned with functions defined on convex sets, the following
definition shall be useful.

\begin{definition}
[Exponential convexity]Let $X$ be a convex set and $\phi:X\rightarrow
\mathbb{R}$ a mapping. Then $\phi$ is said to be \emph{exponentially convex}
if the function $X\times X\rightarrow\mathbb{R}$ given by $\left(  x,y\right)
\rightarrow\phi\left(  \frac{x+y}{2}\right)  $ is positive definite.
\end{definition}

Normally exponential convexity is defined as positive definiteness of $\phi\left(  x+y\right)$ (as is done in for instance \cite{Nussbaum1972}), but the definition given here allows the function $\phi$ only to be defined on a convex set.

\subsection{Metric properties of $\JD_{\alpha}$}

\label{jsdmetric}

With Theorem \ref{theorem:1} we prove the following for Jensen
divergence of order $\alpha.$

\begin{theorem}
\label{theorem:2} For $\alpha\in(0,2],$ the space $\left(  M_{+}%
^{1}(\mathbb{N}),\JD_{\alpha}^{1/2}\right)  $ can be isometrically embedded in
a real separable Hilbert space.
%Let $\alpha\in(0,2]$. Then, there exists a bijection $\Phi$ between $M_{+}^{1}(\mathbb{N})$ and a subset $\mathcal{H}_{\JD_{\alpha}}$ of a real separable Hilbert space $\mathcal{H}$ with norm $\|\cdot\|$, such that for all
%$(P,Q)\in M_{+}^{1}(\mathbb{N})\times M_{+}^{1}(\mathbb{N})$, we have that ${\JD_{\alpha
%}(P,Q)=\Vert\Phi(P)-\Phi(Q)\Vert^{2}}$.

\end{theorem}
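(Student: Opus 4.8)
The plan is to invoke Theorem \ref{theorem:1}: since that theorem characterizes Hilbert-space embeddability of $(X,d^{1/2})$ by negative definiteness of $d$, it suffices to prove that $\JD_{\alpha}$ is a negative definite distance on $M_+^1(\mathbb{N})$ for $\alpha\in(0,2]$. Concretely, I must show that for every finite collection of distributions $P_1,\dots,P_n$ and every real vector $(c_i)$ with $\sum_i c_i=0$, one has $\sum_{i,j}c_ic_j\,\JD_{\alpha}(P_i,P_j)\le 0$.

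The key reduction is to rewrite $\JD_{\alpha}$ in terms of the entropy-of-order-$\alpha$ function. Using the definition \eqref{eq:8}, $\JD_{\alpha}(P,Q)=S_{\alpha}\!\left(\frac{P+Q}{2}\right)-\frac12 S_{\alpha}(P)-\frac12 S_{\alpha}(Q)$, and $S_{\alpha}(P)=\frac{1-\sum_i p_i^{\alpha}}{\alpha-1}$, the whole expression decouples coordinate-by-coordinate. For each coordinate $m$, define $g(x)=\frac{-x^{\alpha}}{\alpha-1}$ on $[0,1]$ (the per-coordinate contribution to $S_{\alpha}$, dropping the constant which cancels under $\sum c_i=0$); then $\JD_{\alpha}(P,Q)=\sum_m\big[g(\tfrac{p_m+q_m}{2})-\tfrac12 g(p_m)-\tfrac12 g(q_m)\big]$. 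Since negative definiteness is preserved under summation (over coordinates) and under adding terms depending on a single argument (these drop out because $\sum_i c_i=0$), it is enough to prove that the single-variable map $(x,y)\mapsto g\!\left(\frac{x+y}{2}\right)$ is \emph{negative definite} on $[0,1]$. Equivalently, writing $\phi=-g$, I want $(x,y)\mapsto\phi\!\left(\frac{x+y}{2}\right)=\frac{1}{\alpha-1}\left(\frac{x+y}{2}\right)^{\alpha}$ to be \emph{positive definite}, i.e. that $x\mapsto\frac{x^{\alpha}}{\alpha-1}$ is exponentially convex on $[0,1]$ in the sense of the definition above.

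The heart of the argument—and the main obstacle—is therefore establishing this exponential convexity (equivalently, the positive definiteness of $(x,y)\mapsto\big(\frac{x+y}{2}\big)^{\alpha}/(\alpha-1)$) precisely on the range $\alpha\in(0,2]$. The natural route is an integral representation: one expresses the power function via a formula of the form $x^{\alpha}=c_{\alpha}\int_0^{\infty}(1-e^{-tx})\,t^{-1-\alpha}\,dt$ (valid for $\alpha\in(0,1)$, with the sign of $c_{\alpha}$ tracking the factor $\frac{1}{\alpha-1}$), so that $\big(\frac{x+y}{2}\big)^{\alpha}$ becomes a superposition of terms of the form $1-e^{-t(x+y)/2}$. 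Since each map $(x,y)\mapsto e^{-t(x+y)/2}=e^{-tx/2}e^{-ty/2}$ has the rank-one form $u(x)u(y)$ and is thus positive definite, and since positive definiteness is closed under nonnegative superpositions and under adding constants, one reads off the desired sign once the constant $c_{\alpha}$ carries the correct sign after division by $\alpha-1$. The case $\alpha=1$ (Shannon entropy) is handled by the limit $\lim_{\alpha\to1}S_{\alpha}=H$, recovering the Endres--Schindelin result, and the endpoint $\alpha=2$ is direct since $x^2$ yields an explicitly positive definite (indeed Euclidean) structure. The delicate points will be justifying the integral representation across the subintervals of $(0,2]$ with the correct signs, and confirming that the passage from finitely many coordinates to $M_+^1(\mathbb{N})$ (and the separability of the ambient Hilbert space) causes no trouble—both of which follow from the stability of negative type under sums and limits together with Theorem \ref{theorem:1}.
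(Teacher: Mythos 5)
Your proposal is correct and essentially reproduces the paper's own proof: invoke Schoenberg's criterion (Theorem \ref{theorem:1}), reduce negative definiteness of $\JD_{\alpha}$ coordinate-wise (single-variable terms vanishing because $\sum_i c_i=0$) to a sign statement about the kernel $(x,y)\mapsto\big(\tfrac{x+y}{2}\big)^{\alpha}$, and establish that via the integral representation of $x^{\alpha}$ as a superposition of exponentials, each term being a rank-one kernel $e^{-tx/2}e^{-ty/2}$ --- exactly the paper's Lemma \ref{powerrep} and Lemma \ref{jsdnsdlem}, with $\alpha=1$ by continuity and $\alpha=2$ direct, as in the paper. The one detail you flag but leave open, the representation on the subinterval $(1,2)$ where your stated formula $c_{\alpha}\int_0^{\infty}(1-e^{-tx})t^{-1-\alpha}\,dt$ diverges, is resolved in the paper by the subtracted kernel $e^{-xt}-(1-xt)$, whose extra linear term is harmless for precisely the reason already present in your argument: under $\sum_i c_i=0$, constant and linear contributions drop out.
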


Note that Theorem \ref{theorem:2} implies that the same holds for
$\QJD_{\alpha}$ for sets of commuting quantum states.

We use the following lemma to prove that $\JD_{\alpha}$ is negative
definite for $\alpha\in(0,2].$ Theorem \ref{theorem:2} then follows
from this and Theorem \ref{theorem:1}.

%The following lemma establishes negative definiteness of $\JD_{\alpha}$ for $\alpha\in(0,2]$. Theorem \ref{theorem:2} therefore follows from Schoenberg's theorem.
%The lemma is proved in a basically elementary way, using facts and
%tricks from harmonic analysis (see for example \cite{Bergetal84}).

\begin{lemma}
\label{powerrep} For $\alpha\in(0,1),$ we have
\[
x^{\alpha}=\frac{1}{\Gamma(-\alpha)}\int_{0}^{\infty}\frac{e^{-xt}%
-1}{t^{\alpha+1}}dt,
\]
where $\Gamma(\alpha)=\int_{0}^{\infty}t^{\alpha-1}e^{-t}dt$ is the Gamma
function. For $\alpha\in(1,2),$ we have
\[
x^{\alpha}=\frac{1}{\Gamma(-\alpha)}\int_{0}^{\infty}\frac{e^{-xt}%
-(1-xt)}{t^{\alpha+1}}dt.
\]

\end{lemma}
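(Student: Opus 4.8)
The plan is to reduce each of the two identities, by a scaling substitution, to an $x$-free integral, and then to evaluate that integral by one integration by parts together with the functional equation $\Gamma(z+1)=z\,\Gamma(z)$. Throughout we may assume $x>0$, since both sides vanish at $x=0$.

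For $\alpha\in(0,1)$, substituting $u=xt$ gives
\[
\int_{0}^{\infty}\frac{e^{-xt}-1}{t^{\alpha+1}}\,dt=x^{\alpha}\int_{0}^{\infty}\frac{e^{-u}-1}{u^{\alpha+1}}\,du,
\]
so it suffices to show that the remaining integral equals $\Gamma(-\alpha)$. This integral converges: near $u=0$ the numerator is $O(u)$, so the integrand is $O(u^{-\alpha})$ with $\alpha<1$, while near infinity the numerator is bounded and $u^{-\alpha-1}$ is integrable. Integrating by parts with $f=e^{-u}-1$ and $dg=u^{-\alpha-1}\,du$, the boundary terms vanish at both endpoints (here one uses $0<\alpha<1$), leaving $-\tfrac{1}{\alpha}\int_{0}^{\infty}u^{-\alpha}e^{-u}\,du=-\tfrac{1}{\alpha}\Gamma(1-\alpha)$. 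The functional equation $\Gamma(1-\alpha)=-\alpha\,\Gamma(-\alpha)$ then turns this into $\Gamma(-\alpha)$, as required.

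For $\alpha\in(1,2)$ the same substitution reduces the claim to
\[
\int_{0}^{\infty}\frac{e^{-u}-(1-u)}{u^{\alpha+1}}\,du=\Gamma(-\alpha).
\]
The extra subtracted term $-(1-u)$ is exactly the second-order Taylor correction that makes the numerator $O(u^{2})$ near the origin, hence the integrand $O(u^{1-\alpha})$ and integrable there since $\alpha<2$; near infinity the linear term is again dominated by $u^{-\alpha-1}$ because $\alpha>1$. Integrating by parts with $f=e^{-u}-1+u$ and the same $dg$, the boundary terms again vanish, and the resulting integral is a multiple of $\int_{0}^{\infty}\frac{e^{-u}-1}{u^{\alpha}}\,du$. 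This last integral is precisely the $x$-free integral of the first case but with parameter $\alpha-1\in(0,1)$, so it equals $\Gamma(1-\alpha)$; the functional equation then yields $\Gamma(-\alpha)$ once more.

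The computation is elementary, so the only points requiring genuine care are the convergence of the improper integrals at both endpoints --- where the subtracted Taylor polynomials are exactly what guarantee integrability at the origin --- and the bookkeeping for $\Gamma$ at negative arguments. In particular $\Gamma(-\alpha)<0$ for $\alpha\in(0,1)$, which is consistent with the numerator $e^{-u}-1$ being negative, and this sign tracking is the main thing to get right.
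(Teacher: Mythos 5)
Your proof is correct: the scaling substitution, the convergence checks at both endpoints, the boundary-term analysis in the integrations by parts, and the uses of $\Gamma(1-\alpha)=-\alpha\,\Gamma(-\alpha)$ all check out, and the reduction of the $\alpha\in(1,2)$ case to the $(0,1)$ case with parameter $\alpha-1$ is legitimate since that case was established for the whole interval. The paper, however, travels in the opposite direction. It starts from the representation $z^{\gamma}=\frac{1}{\Gamma(-\gamma)}\int_{0}^{\infty}e^{-zt}\,t^{-(\gamma+1)}\,dt$ for $\gamma\in(-1,0)$ (obtained by the same substitution $r=tz$ that you use) and then integrates twice with respect to the \emph{external} variable: integrating $z^{\gamma}$ from $0$ to $y$ yields the $(0,1)$ identity, and integrating once more yields the $(1,2)$ identity, with the subtracted terms $1$ and $1-xt$ appearing automatically as the antiderivatives of $e^{-zt}$ under the integral sign. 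Your route instead scales $x$ out first and evaluates a single parameter-free integral by parts, descending in the parameter ($\alpha\mapsto\alpha-1$) rather than ascending; the functional equation does the same bookkeeping in both arguments. What your version buys: convergence and boundary terms are verified explicitly, and no interchange of integration order is required, whereas the paper's integration under the integral sign is used without comment (it is justified by Tonelli since the inner integrand has constant sign, but the paper does not say so). What the paper's version buys: it makes transparent \emph{why} the correction terms are exactly the Taylor polynomials of $e^{-xt}$, and it extends mechanically to $\alpha\in(n,n+1)$ by further integrations, subtracting higher-order Taylor polynomials.
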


\begin{proof}
Let $\gamma\in(-1,0)$. From the definition of the Gamma function, we have the following equality:
\beqn
z^{\gamma} = z^{\gamma}\frac{1}{\Gamma(-\gamma)}\int_0^{\infty}r^{-(\gamma+1)}e^{-r}dr.
\eeqn
By substituting $r=tz$ we get
\beqn
z^{\gamma} =  \frac{1}{\Gamma(-\gamma)}\int_0^{\infty}\frac{e^{-zt}}{t^{\gamma+1}}dt.
\eeqn
Let $\beta\in(0,1)$ such that $\beta = \gamma+1$. Integrating $z^{\gamma}$ for $z$ from zero to $y$ and multiplying by $\gamma+1$ gives,
\beqrn
y^{\beta} &=& (\gamma+1)\int_0^yz^{\gamma}dz \\
&=& \frac{1}{\Gamma(-\beta)}\int_0^{\infty}\frac{e^{-yt}-1}{t^{\beta+1}}dt.
\eeqrn
Now let $\alpha\in(1,2)$ such that $\alpha = \beta+1$. Integrating $y^{\beta}$ and multiplying by $\beta+1$ gives the result.
\beqrn
x^{\alpha} &=& (\beta+1)\int_0^xy^{\beta}dy \\
&=& \frac{1}{\Gamma(-\alpha)}\int_0^{\infty}\frac{e^{-xt} -(1-xt)}{t^{\alpha+1}}dt.\qedhere
\eeqrn
\end{proof}

\begin{lemma}
\label{jsdnsdlem} For $\alpha\in\left(  0,2\right]  $, the distance space
$\left(  M_{+}^{1},\JD_{\alpha}\right)  $ is of negative type.
\end{lemma}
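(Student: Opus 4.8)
The plan is to verify the negative-type inequality $\sum_{i,j}c_ic_j\JD_\alpha(P_i,P_j)\le 0$ directly, for any finite family $P_1,\dots,P_m\in M_+^1$ with point probabilities $p_{i,k}$ and any reals $c_1,\dots,c_m$ satisfying $\sum_i c_i=0$. First I would expand $\JD_\alpha$ via $S_\alpha(P)=(1-\sum_k p_k^\alpha)/(\alpha-1)$; the constant terms $1/(\alpha-1)$ cancel between $S_\alpha(\frac{P+Q}{2})$ and $\tfrac12 S_\alpha(P)+\tfrac12 S_\alpha(Q)$, leaving
\[
\JD_\alpha(P,Q)=\frac{1}{\alpha-1}\sum_k\Big[\tfrac12 p_k^\alpha+\tfrac12 q_k^\alpha-\big(\tfrac{p_k+q_k}{2}\big)^\alpha\Big].
\]
Forming the double sum, each term $\tfrac12 p_{i,k}^\alpha$ depends only on $i$, so $\sum_{i,j}c_ic_j\,\tfrac12 p_{i,k}^\alpha=\tfrac12(\sum_i c_i p_{i,k}^\alpha)(\sum_j c_j)=0$, and likewise for the $q$-term. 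This collapses the expression to
\[
\sum_{i,j}c_ic_j\JD_\alpha(P_i,P_j)=\frac{-1}{\alpha-1}\sum_k\sum_{i,j}c_ic_j\Big(\tfrac{p_{i,k}+p_{j,k}}{2}\Big)^\alpha,
\]
so it suffices to control, coordinate by coordinate, the sign of $A_k:=\sum_{i,j}c_ic_j(\tfrac{p_{i,k}+p_{j,k}}{2})^\alpha$.

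The core idea is to feed the integral representation of Lemma \ref{powerrep} into $A_k$. For $\alpha\in(0,1)$, writing $x=p_{i,k}$ and $y=p_{j,k}$ and substituting $(\tfrac{x+y}{2})^\alpha=\frac{1}{\Gamma(-\alpha)}\int_0^\infty(e^{-(x+y)t/2}-1)\,t^{-(\alpha+1)}\,dt$, I would interchange the finite sum with the integral. The constant $-1$ contributes $-(\sum_i c_i)^2=0$, while the exponential factorizes, $e^{-(x_i+x_j)t/2}=e^{-x_it/2}e^{-x_jt/2}$, producing a perfect square, so that
\[
A_k=\frac{1}{\Gamma(-\alpha)}\int_0^\infty\frac{\big(\sum_i c_i e^{-p_{i,k}t/2}\big)^2}{t^{\alpha+1}}\,dt.
\]
For $\alpha\in(1,2)$ I would use the second representation of Lemma \ref{powerrep}; the extra affine term $1-\tfrac{(x+y)t}{2}$ is again annihilated by $\sum_i c_i=0$ (the part linear in $x_i+x_j$ splits into products each carrying a factor $\sum c_i$), giving the identical formula for $A_k$.

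It then remains to track signs. The integrand is nonnegative, so $A_k$ inherits the sign of $1/\Gamma(-\alpha)$, which is negative for $\alpha\in(0,1)$ (as $\Gamma<0$ on $(-1,0)$) and positive for $\alpha\in(1,2)$ (as $\Gamma>0$ on $(-2,-1)$). Multiplying by the prefactor $-1/(\alpha-1)$, which is positive on $(0,1)$ and negative on $(1,2)$, yields a nonpositive product in both regimes, and summing over $k$ preserves the sign; this is exactly the claimed inequality. The endpoint $\alpha=2$ I would handle directly, since $(\tfrac{x+y}{2})^2$ expands so that $A_k=\tfrac12(\sum_i c_i p_{i,k})^2\ge 0$ under $\sum_i c_i=0$, matching the $(1,2)$ pattern; and $\alpha=1$ returns classical $\JD$ and follows by continuity of $\JD_\alpha$ in $\alpha$.

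The step I expect to be the main obstacle is the analytic justification of the substitution and the swap of sum and integral, since the weight $t^{-(\alpha+1)}$ is singular at $t=0$. The saving feature is precisely the mean-zero condition, which forces $\sum_i c_i e^{-p_{i,k}t/2}=O(t)$ as $t\to 0$, so the integrand is $O(t^{1-\alpha})$ near the origin, integrable for every $\alpha<2$, while the decay of $t^{-(\alpha+1)}$ controls $t\to\infty$; since the inner sum is finite, the interchange is then legitimate. A secondary point worth noting is that over all of $M_+^1(\mathbb{N})$ one should restrict to distributions of finite order-$\alpha$ entropy so that the coordinate sum $\sum_k A_k$ converges, but because every $A_k$ carries a fixed sign this causes no trouble for the inequality itself.
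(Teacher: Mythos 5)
Your proof is correct and follows essentially the same route as the paper's: both reduce the negative-type inequality, via the mean-zero condition $\sum_i c_i=0$, to the pointwise map $x\curvearrowright x^{\alpha}$, feed in the integral representations of Lemma~\ref{powerrep} so that the exponential kernel factorizes into the perfect square $\bigl(\sum_i c_i e^{-p_{i,k}t/2}\bigr)^{2}$, and handle $\alpha=1$ by continuity and $\alpha=2$ by direct expansion. The only difference is presentational: you track the sign of $\Gamma(-\alpha)$ and justify the sum--integral interchange explicitly, where the paper packages the same computation in the language of exponential convexity and positive linear combinations of $x\curvearrowright 1-e^{-tx}$.
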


\begin{proof}
Let $(c_i)_{i\leq n}$ be a set of real numbers such that $\sum_{i=1}^nc_i=0$. For two probability
distributions $P$ and $Q$, we have
\[
\JD_{\alpha}(P,Q)=S_{\alpha}\left(  \frac{P+Q}{2}\right)  -\frac{1}%
{2}S_{\alpha}(P)-\frac{1}{2}S_{\alpha}(Q).
\]
Observe that for any real valued, single-variable function $f$, we have $\sum_{i,j}c_{i}c_{j}f(x_{i})=0$.
Hence, we only need to prove that the function
\[
S_{\alpha}\left(  \frac{P+Q}{2}\right)  =\frac{1}{\alpha-1}-\frac{1}%
{(\alpha-1)}\sum_{i}\left(  \frac{p_{i}+q_{i}}{2}\right)  ^{\alpha}%
\]
is negative definite for all $\alpha\in(0,2]$. From this decomposition of $S_{\alpha}$ into a sum over point
probabilities it follows that we need to show that $x\curvearrowright
x^{\alpha}$ is exponentially convex. Lemma \ref{powerrep} shows that for fixed $0<\alpha<1$ and fixed $1<\alpha<2$,
the mapping $x\curvearrowright-x^{\alpha}$ can be obtained as the limit of linear
combinations with positive coefficients of functions of the type
$x\curvearrowright1-e^{-tx}$ and $x\curvearrowright1-e^{-tx} - tx$ respectively. Each such function is exponentially convex
since the linear terms are, and for  non-negative real numbers $x_{1},\dots
,x_{n}$,
\[
\sum_{i,j}c_{i}c_{j}(-e^{-t(x_{i}+x_{j})})=-\left(  \sum_{i=1}^{n}%
c_{i}e^{-tx_{i}}\right)  ^{2}\leq0.
\]
The case $\alpha=1$ follows by continuity. The case $\alpha=2$ also follows by
continuity, but a direct proof without Lemma~\ref{powerrep}	 is straightforward.
\end{proof}

\begin{proof}[ of Theorem \ref{theorem:2}]
Follows directly from Lemma \ref{jsdnsdlem} and Theorem \ref{theorem:1}.
\end{proof}

A constructive proof of Theorem \ref{theorem:2} for $\JD_{1}$ ($\JD$) is given
by Fuglede \cite{Fuglede03,Fuglede2004}, who uses an embedding into a subset of a real
Hilbert space defined by a logarithmic spiral.

\subsection{Metric properties of $\QJD_{\alpha}$ for qubits}

Using the same approach as above, we prove the following for quantum
Jensen divergence of order~$\alpha$ and states on two-dimensional
Hilbert spaces.

\begin{theorem}
\label{qjsd:qubits} For $\alpha\in(0,2]$, the space 

\[\left(
\mathcal{B}_{+}^{1}(\mathcal{H}_{2}),\QJD_{\alpha}^{1/2}\right)  
\]
can be
isometrically embedded in a real separable Hilbert space.
%Let $\alpha\in(0,2]$. Then there exists a bijection $\Phi$ between $\densop{2}$ and a subset $\mathcal{H}_{\QJD_{\alpha}}$ of a real separable Hilbert space $\mathcal{H}$ with norm $\|\cdot\|$, such that for all
%$(\rho,\sigma)\in \densop{2}\times\densop{2}$, we have that ${\QJD_{\alpha
%}(\rho,\sigma)=\Vert\Phi(\rho)-\Phi(\sigma)\Vert^{2}}$.

\end{theorem}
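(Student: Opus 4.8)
The plan is to follow the template of Lemma~\ref{jsdnsdlem} and Theorem~\ref{theorem:2}, with the spectra of qubit density matrices playing the role of scalar point probabilities. By Theorem~\ref{theorem:1} it suffices to prove that $\left(\mathcal{B}_+^1(\mathcal{H}_2),\QJD_\alpha\right)$ is of negative type. So I fix reals $(c_i)_{i\le n}$ with $\sum_i c_i=0$ and qubits $(\rho_i)_{i\le n}$, and expand $\QJD_\alpha(\rho_i,\rho_j)=S_\alpha\!\left(\frac{\rho_i+\rho_j}{2}\right)-\frac12 S_\alpha(\rho_i)-\frac12 S_\alpha(\rho_j)$. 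The single-variable terms contribute nothing to $\sum_{i,j}c_ic_j\QJD_\alpha(\rho_i,\rho_j)$, since $\sum_{i,j}c_ic_j f(\rho_i)=\big(\sum_i c_i f(\rho_i)\big)\big(\sum_j c_j\big)=0$, and the additive constant $\frac{1}{\alpha-1}$ inside $S_\alpha$ drops out for the same reason. Everything therefore reduces to showing that $(\rho,\sigma)\mapsto\Tr\!\big((\tfrac{\rho+\sigma}{2})^\alpha\big)$ has the definiteness that makes $S_\alpha\!\left(\frac{\rho+\sigma}{2}\right)$ negative definite; equivalently, that $\rho\mapsto\Tr(\rho^\alpha)$ is exponentially convex for $\alpha\in(0,1)$ and exponentially concave for $\alpha\in(1,2)$.

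Next I would transport Lemma~\ref{powerrep} to the matrix setting eigenvalue by eigenvalue. Writing $\mu=\frac{\rho+\sigma}{2}$ in its spectral decomposition and summing the scalar identities over the two eigenvalues gives, for $\alpha\in(0,1)$, $\Tr(\mu^\alpha)=\frac{1}{\Gamma(-\alpha)}\int_0^\infty t^{-(\alpha+1)}\big(\Tr(e^{-t\mu})-2\big)\,dt$, and the analogue with integrand $\Tr(e^{-t\mu})-\big(2-t\,\Tr\mu\big)$ for $\alpha\in(1,2)$. Here the subtracted quantities $2$ and $t\,\Tr\mu=\tfrac t2(\Tr\rho+\Tr\sigma)$ are of the additive form $\phi(\rho)+\psi(\sigma)$, so under the zero-sum condition they are annihilated. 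Since $\Gamma(-\alpha)<0$ on $(0,1)$ while $\Gamma(-\alpha)>0$ on $(1,2)$, matching these signs against the required direction of the inequality shows that \emph{both} ranges of $\alpha$ collapse to the single positivity statement
\[
\sum_{i,j}c_ic_j\,\Tr\!\big(e^{-t(\rho_i+\rho_j)/2}\big)\ \ge\ 0\qquad\text{for every }t>0 .
\]
The endpoints $\alpha=1$ and $\alpha=2$ I would then obtain by continuity, exactly as in Lemma~\ref{jsdnsdlem}.

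Now the qubit structure enters decisively. Writing $\rho=\frac12(I+\vec r\cdot\vec\sigma)$ with Bloch vector $\vec r$, the average $\frac{\rho+\sigma}{2}$ has Bloch vector $\frac{\vec r+\vec s}{2}$ and eigenvalues $\frac12\pm\frac14|\vec r+\vec s|$, so a one-line computation yields $\Tr\!\big(e^{-t(\rho+\sigma)/2}\big)=2e^{-t/2}\cosh\!\big(\tfrac t4|\vec r+\vec s|\big)$. Because the prefactor $2e^{-t/2}$ is positive, the entire theorem reduces to proving that the kernel $(\vec r,\vec s)\mapsto\cosh\!\big(b|\vec r+\vec s|\big)$, with $b=t/4>0$, is positive definite on the Bloch ball $\{|\vec r|\le1\}$.

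This last reduction is where I expect the genuine difficulty to lie. In the classical, commutative case the corresponding kernel is one-dimensional, and the hyperbolic addition formula $\cosh(b(r+s))=\cosh(br)\cosh(bs)+\sinh(br)\sinh(bs)$ exhibits it at once as a Gram kernel, so that $\sum_{i,j}c_ic_j\cosh(b(r_i+r_j))=\big(\sum_i c_i\cosh br_i\big)^2+\big(\sum_i c_i\sinh br_i\big)^2\ge0$. For genuine qubits the argument of $\cosh$ is the Euclidean norm $|\vec r+\vec s|$ in $\mathbb{R}^3$, which does not split, and a term-by-term treatment of $\cosh(b|\vec r+\vec s|)=\sum_m\frac{b^{2m}}{(2m)!}|\vec r+\vec s|^{2m}$ is hopeless: the individual kernels $|\vec r+\vec s|^{2m}$ are not positive definite — already in one dimension positivity is recovered only after the full summation. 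The resolution must therefore use the whole $\cosh$ together with the two features peculiar to this setting, namely that Bloch space is only three-dimensional and that we may assume $\sum_i c_i=0$, which lets us discard any summand of additive form $\phi(\vec r)+\psi(\vec s)$. Concretely I would seek a positive measure over directions $\vec\omega$ and frequencies $c$ whose mixture of the manifestly positive-definite kernels $\cosh\!\big(c\,\vec\omega\cdot(\vec r+\vec s)\big)$ reproduces $\cosh(b|\vec r+\vec s|)$; the delicate point — and the crux of the theorem — is that such a moment-matching cannot succeed uniformly on all of $\mathbb{R}^3$, so the boundedness $|\vec r|\le1$ of the Bloch ball has to be exploited in an essential way to close the estimate.
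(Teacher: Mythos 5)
Your reduction is correct and coincides step for step with the paper's own: invoking Theorem~\ref{theorem:1}, discarding single-variable and affine terms under the zero-sum constraint, transporting Lemma~\ref{powerrep} to matrices with the sign of $\Gamma(-\alpha)$ handled correctly on $(0,1)$ and $(1,2)$, continuity at the endpoints, and the Bloch computation $\Tr\big(e^{-t(\rho+\sigma)/2}\big)=2e^{-t/2}\cosh\big(\tfrac t4|\vec r+\vec s|\big)$, which is exactly the paper's $2e^{-t/2}\cosh\big(\tfrac t2(2\Tr\mu^2-1)^{1/2}\big)$. But there you stop: the positivity of $\sum_{i,j}c_ic_j\cosh\big(b|\vec r_i+\vec r_j|\big)$ \emph{is} the theorem, and your proposal only names it as the crux, sketches a hoped-for positive mixture of directional kernels $\cosh\big(c\,\vec\omega\cdot(\vec r+\vec s)\big)$, and defers the estimate. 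That is a genuine gap, and moreover the route you sketch provably cannot close, even on the Bloch ball: a representation $\cosh(b|u|)=\int e^{\lambda\cdot u}\,d\mu(\lambda)$ with $\mu\geq 0$, valid for $|u|\leq 2$, could be rotation-averaged (using $\tfrac1{4\pi}\int_{S^2}e^{c\,\omega\cdot u}d\omega=\sinh(c|u|)/(c|u|)$) to give $\cosh(br)=\int_0^\infty \tfrac{\sinh(cr)}{cr}\,d\nu(c)$ on $[0,2)$, forcing the moments $\int c^{2m}d\nu=(2m+1)b^{2m}$; since $\big((2m+1)b^{2m}\big)^{1/2m}\to b$ pins $\mathrm{supp}\,\nu$ inside $[0,b]$, the moments would be bounded by $b^{2m}\nu_{\mathrm{tot}}$, a contradiction. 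So the \emph{unrestricted} positive definiteness you aim for is actually false on qubit states, and any correct argument must exploit $\sum_ic_i=0$ beyond merely killing additive terms.

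The comparison with the paper is instructive, because your stated reason for abandoning the termwise route indicts the paper's own completion. The paper finishes by Taylor-expanding the hyperbolic cosine, claiming each kernel $\big(2\Tr\mu_{ij}^2-1\big)^k=4^{-k}|\vec r_i+\vec r_j|^{2k}$ is positive definite, and reducing to $k=1$ via the product theorem (Theorem~\ref{productfunctions}), with the base case drawn from Lemma~\ref{Hmetricnsd}. But Lemma~\ref{Hmetricnsd} yields only the zero-sum (conditional) inequality, whereas the product theorem requires unrestricted positive definiteness; unrestrictedly the base case already fails, e.g.\ $\rho_1=I/2$ and $\rho_2$ pure give the kernel matrix $\smatrix{0 & 1/4\\ 1/4 & 1}$, which is not positive semidefinite. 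Nor can the termwise argument be repaired in the conditional category: already for commuting qubits, writing $u_i=p_i-\tfrac12$ and $s_l=\sum_ic_iu_i^l$, one finds $\sum_{i,j}c_ic_j(u_i+u_j)^4=8s_1s_3+6s_2^2$, which is negative for suitable zero-sum $c$, so the individual powers are not even conditionally positive definite and positivity indeed emerges only from the full $\cosh$ series, exactly as you observed in one dimension. In short: you halted precisely at the step where the paper's proof of Lemma~\ref{qjsdnsdlem} is itself defective, and a sound completion requires a genuinely new idea that treats the whole kernel $\cosh\big(b|\vec r+\vec s|\big)$ within the zero-sum (negative-type) framework; neither your proposal nor, as literally written, the paper supplies one.
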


%\begin{theorem}
%\label{qjsd:qubits} The function $\QJD_{\alpha}$ is negative definite on
%$\mathcal{B}_{+}^{1}(\mathcal{H}_{2})\times\mathcal{B}_{+}^{1}(\mathcal{H}_{2})$. Therefore, there exists a subset $\overline{\mathcal{H}}_{\QJD}\subseteq\overline{\mathcal{H}}$ of a real separable Hilbert space $\overline{\mathcal{H}}$ and a one-to-one bijection $\Phi$ between
%$\mathcal{B}_{+}^{1}(\mathcal{H}_{2})$ and $\overline{\mathcal{H}}_{\QJD}$
%such that, for all $(\rho,\sigma)\in\mathcal{B}_{+}^{1}(\mathcal{H}_{2}%
%)\times\mathcal{B}_{+}^{1}(\mathcal{H}_{2})$, $\QJD_{\alpha}(\rho
%,\sigma)=\Vert\Phi(\rho)-\Phi(\sigma)\Vert^{2}$ with $\Vert\cdot\Vert$
%denoting the norm in $\overline{\mathcal{H}}$.
%\end{theorem}

This is established by the following lemmas and Theorem \ref{theorem:1}.

\begin{lemma}
\label{Hmetricnsd} Let $\left(  V,\langle\cdot|\cdot\rangle\right)  $ be a
real Hilbert space with norm $\Vert\cdot\Vert_{2}=\langle\cdot|\cdot
\rangle^{1/2}$. Then, $\left(  V,\Vert\cdot\Vert_{2}^{2}\right)  $ is a
distance space of negative type.
\end{lemma}

\begin{proof}
The result follows immidiately if we expand the distance function $\|\cdot\|_2^2$ in terms of the inner product:
\begin{multline*}
\sum_{i,j}c_ic_j\langle x_i-x_j,x_i-x_j\rangle\\
= \sum_{i,j}c_ic_j\big(\|x_i\|_2^2 + \|x_j\|_2^2 - 2\langle x_i,x_j\rangle\big)\\
= 2\sum_ic_i\sum_jc_j\|x_j\|_2^2 - 2\sum_{i,j}c_ic_j\langle x_i,x_j\rangle\\
= 0 - 2\sum_{i,j}c_ic_j\langle x_i,x_j\rangle\\
= -2\Big\|\sum_ic_ix_i\Big\|_2^2 \leq 0.\qedhere
\end{multline*}
\end{proof}

\begin{lemma}
\label{qjsdnsdlem} The distance space $\left(
\mathcal{B}_{+}^{1}(\mathcal{H}_{2}),\QJD_{\alpha}\right) , \alpha\in(0,2]$ is of negative type.
\end{lemma}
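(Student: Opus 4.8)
The plan is to follow the same three-move strategy as in the classical Lemma~\ref{jsdnsdlem}, now carried out on the Bloch ball. Fix reals $(c_i)_{i\le n}$ with $\sum_i c_i=0$ and states $\rho_1,\dots,\rho_n\in\mathcal B_+^1(\mathcal H_2)$; write $\tau_{ij}=\tfrac12(\rho_i+\rho_j)$. Since $-\tfrac12 S_\alpha(\rho_i)-\tfrac12 S_\alpha(\rho_j)$ and the additive constant $\tfrac1{\alpha-1}$ in $S_\alpha(\tau_{ij})=\frac{1-\Tr(\tau_{ij}^\alpha)}{\alpha-1}$ are single-variable in $i$ or constant, they are annihilated by $\sum_{i,j}c_ic_j(\cdot)$ when $\sum_i c_i=0$. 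Hence $\sum_{i,j}c_ic_j\QJD_\alpha(\rho_i,\rho_j)=\frac{-1}{\alpha-1}\sum_{i,j}c_ic_j\Tr(\tau_{ij}^\alpha)$, and the lemma reduces to showing that $\rho\mapsto\Tr(\rho^\alpha)$ is exponentially convex on qubits for $\alpha\in(1,2)$ and that $\rho\mapsto-\Tr(\rho^\alpha)$ is exponentially convex for $\alpha\in(0,1)$ (the cases $\alpha=1,2$ following by continuity, exactly as before).

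The second move is to linearise the power $x^\alpha$ through Lemma~\ref{powerrep}. Applying the integral representation to the eigenvalues of $\tau_{ij}$ and taking traces expresses $\pm\Tr(\tau_{ij}^\alpha)$, up to the positive factor $|\Gamma(-\alpha)|^{-1}$ and the positive weight $t^{-(\alpha+1)}$, as a positive mixture over $t>0$ of the kernels $\rho\mapsto\Tr(e^{-t\rho})$, together with correction terms ($-\Tr I$, and $t\Tr\tau_{ij}=t$ when $\alpha\in(1,2)$) that are again constant or single-variable and so drop under $\sum_i c_i=0$. Because positive mixtures and pointwise limits of exponentially convex functions are exponentially convex, it suffices to prove that $(\rho,\sigma)\mapsto\Tr\!\big(e^{-t(\rho+\sigma)/2}\big)$ is positive definite for each $t>0$.

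This is where the qubit structure enters. Writing $\rho=\tfrac12(I+\vec r\cdot\vec\sigma)$ with $|\vec r|\le1$, the midpoint $\tau=\tfrac12(\rho+\sigma)$ has Bloch vector $\tfrac12(\vec r+\vec s)$ and eigenvalues $\tfrac12\big(1\pm|\tfrac{\vec r+\vec s}{2}|\big)$, so a direct computation yields $\Tr(e^{-t\tau})=2e^{-t/2}\cosh\!\big(\tfrac t2\,|\tfrac{\vec r+\vec s}{2}|\big)$. The positive constant $2e^{-t/2}$ is irrelevant, so the entire lemma comes down to the single kernel-theoretic statement that, on the Bloch ball,
\[
(\vec r,\vec s)\longmapsto \cosh\!\Big(\lambda\,\big|\tfrac{\vec r+\vec s}{2}\big|\Big)
\qquad(\lambda>0)
\]
is positive definite along the hyperplane $\sum_i c_i=0$; equivalently, that this kernel is conditionally positive definite on the ball.

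I expect this last point to be the main obstacle. It does \emph{not} reduce to a termwise argument on the series $\cosh(\lambda|\vec w|)=\sum_k\frac{\lambda^{2k}}{(2k)!}|\vec w|^{2k}$: already the $k=2$ term produces, besides the manifestly positive-definite pieces $\langle\vec r,\vec s\rangle^2$ and $|\vec r|^2|\vec s|^2$, the cross term $\langle\vec r,\vec s\rangle(|\vec r|^2+|\vec s|^2)$, whose quadratic form equals $2\langle\sum_i c_i\vec r_i,\sum_j c_j|\vec r_j|^2\vec r_j\rangle$ and is of indefinite sign. Nor can one appeal to global exponential convexity of $\cosh(\lambda|\cdot|)$ on $\mathbb R^3$: the associated radial moment sequence fails the Hankel positivity test, so $\cosh(\lambda|\cdot|)$ is not a Laplace transform of a positive measure, and the naive ``average of $e^{\langle\vec v,\xi\rangle}$ over a sphere'' representation only delivers $\sinh(\lambda|\vec v|)/(\lambda|\vec v|)$. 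My plan to surmount this is to subtract a fixed reference point (take the maximally mixed state, $\vec s_0=0$) and prove positive definiteness of the reduced kernel $\cosh(\lambda|\tfrac{\vec r+\vec s}{2}|)-\cosh(\tfrac\lambda2|\vec r|)-\cosh(\tfrac\lambda2|\vec s|)+1$ directly, resumming over all $k$ so that the indefinite cross-terms are absorbed; the positive-definite building blocks are the Schur powers of $\langle\vec r,\vec s\rangle$ together with the Euclidean structure of the ball supplied by Lemma~\ref{Hmetricnsd}. Once this conditional positive-definiteness is established, the chain of reductions gives $\sum_{i,j}c_ic_j\QJD_\alpha(\rho_i,\rho_j)\le0$, and the lemma (hence, via Theorem~\ref{theorem:1}, Theorem~\ref{qjsd:qubits}) follows.
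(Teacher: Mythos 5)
Up to the decisive step, your proposal is \emph{exactly} the paper's proof: the same cancellation of single-variable terms under $\sum_i c_i=0$, the same appeal to the matrix form of Lemma~\ref{powerrep}, and the same qubit eigenvalue computation reducing everything to positive definiteness of $(\rho,\sigma)\mapsto\Tr\bigl(e^{-t(\rho+\sigma)/2}\bigr)=2e^{-t/2}\cosh\bigl(\tfrac t2\bigl|\tfrac{\vec r+\vec s}{2}\bigr|\bigr)$. But there you stop: the crucial assertion --- conditional positive definiteness of this cosh kernel on the Bloch ball --- is only announced as a plan (``subtract the maximally mixed state and resum so that the indefinite cross-terms are absorbed'') and never carried out. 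As submitted, your text establishes the reductions but not the lemma; that is a genuine gap.

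That said, your diagnosis of the obstacle is correct, and it cuts deeper than you seem to realize: it applies verbatim to the paper's own proof. The paper expands the cosh in powers, claims $(2\Tr(\rho^2)-1)^k$ is exponentially convex for every $k$, and invokes Theorem~\ref{productfunctions} to reduce to $k=1$; but that product theorem requires \emph{unconstrained} positive definiteness, and the $k=1$ midpoint kernel $\bigl|\tfrac{\vec r_i+\vec r_j}{2}\bigr|^2$ fails it --- for $\rho_1=I/2$, $\rho_2$ pure and $c=(-1,\tfrac14)$ the form equals $2\cdot(-\tfrac14)\cdot\tfrac14+\tfrac1{16}=-\tfrac1{16}<0$. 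What Lemma~\ref{Hmetricnsd} actually delivers is positivity only under $\sum_i c_i=0$, and, exactly as your cross-term $\langle\vec r,\vec s\rangle(|\vec r|^2+|\vec s|^2)$ suggests, such conditional positivity does not survive Schur products: already for $k=2$, the \emph{commuting} states with Bloch coordinates $x\in\{0,1,\tfrac12\}$ and $c=(-\tfrac34,-\tfrac14,1)$ give $\sum_{i,j}c_ic_j\bigl(\tfrac{x_i+x_j}{2}\bigr)^4=-\tfrac1{64}<0$. (This does not contradict Theorem~\ref{theorem:2}: the classical proof never decomposes into these powers, and on commuting families the cancellations across the whole cosh series restore positivity --- which is precisely why no termwise repair can exist and something like your resummation is forced. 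Note also that Lemma~\ref{qjsdpurensdlem} is immune, since its base kernel $\Tr(\rho_1\rho_2)$ is a genuine Gram kernel, unconditionally positive definite.) Your Hankel computation is likewise correct: the moments $(2k+1)\lambda^{2k}$ have $5\lambda^4-9\lambda^4<0$, so $\cosh(\lambda|\cdot|)$ is not a positive mixture of exponentials on $\mathbb R^3$ and no soft global argument is available. So the verdict is two-sided: your proposal does not prove the lemma, but the point at which it stalls is a real gap in the paper's argument as well, and closing it would require a new idea along the lines you sketch (or a proof, for $2\times2$ matrices, of the exponential-convexity conjecture stated in the paper's conclusion).
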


\begin{proof}
Using the same techniques as in the proof of Theorem \ref{theorem:2}, and the fact that Lemma \ref{powerrep} also holds when~$x$ is a matrix, what has to be shown is that for $\rho\in\densop{2}$, the function $\rho\curvearrowright\mbox{\rm Tr}\left(  \exp\left(  -t\rho\right)
\right)  $ is exponentially convex.
Since $\rho$ acts on a two-dimensional Hilbert space, it has only two eigenvalues, $\lambda_{+}$ and $\lambda_{-}$, that satisfy $\lambda_{+}+\lambda_{-}   =1$ and $\lambda_{+}^{2}+\lambda_{-}^{2}   =\mbox{\rm Tr}\left(  \rho^{2}\right)$. A straightforward calculation gives
\begin{equation}
\lambda_{+/-}=\frac{1}{2}\pm\frac{\left(  2\mbox{\rm Tr}\left(  \rho
^{2}\right)  -1\right)  ^{1/2}}{2}.\label{qeval}%
\end{equation}
Plugging this into $\mbox{\rm Tr}\left(\exp(-t\rho)\right)  $ gives
\begin{align*}
\mbox{\rm Tr}\left(  e^{-t\rho}\right)    & =2e^{-t/2}\cosh\left(  \frac{t}%
{2}\left(  2\mbox{\rm Tr}\left(  \rho^{2}\right)  -1\right)  ^{1/2}\right)
\\
& =2e^{-t/2}\sum_{k=0}^{\infty}\frac{t^{2k}}{(2k)!4^{k}}\left(  2\mbox{\rm Tr}\left(
\rho^{2}\right)  -1\right)  ^{k},
\end{align*}
where the second equality follows form the Taylor expansion of hyperbolic
cosine. The task can thus be reduced to proving that $\left(
2\mbox{\rm Tr}\left(  \rho^{2}\right)  -1\right)  ^{k}$ is exponentially
convex for all $k\geq0$. For this we can use the following theorem:
\begin{theorem}
[{\cite[Slight reformulation of  Theorem 1.12]{Bergetal84}}%
]\label{productfunctions} Let $\phi_{1},\phi_{2}:X\curvearrowright\mathbb{C}$ be
exponentially convex functions. Then $\phi_{1}\cdot\phi_{2}$ is
exponentially convex too.
\end{theorem}
This implies that proving it for $k=1$ suffices. The trace distance of two
density matrices is defined as the Hilbert-Schmidt norm $\Vert\cdot\Vert_{2}$ of
their difference. Since the Hilbert-Schmidt norm is a Hilbert-space metric,
Lemma \ref{Hmetricnsd} implies that $(\rho_1,\rho_2)\curvearrowright\Vert\rho_1-\rho_2
\Vert_{1}^{2}$ is negative definite and the equality
\[
\Vert\rho_1-\rho_2\Vert_{2}^{2}=\mbox{\rm Tr}(\rho_1-\rho_2)^{2}%
=2(\mbox{\rm Tr}\rho_1^{2}+\mbox{\rm Tr}\rho_2^{2})-\mbox{\rm Tr}\big((\rho_1
+\rho_2)^{2}\big)%
\]
implies that the function $\mbox{\rm Tr}\left(  (\rho_1+\rho_2)^{2}\right)  $ is
positive definite. From this it follows that the function $2\mbox{\rm Tr}\left(  \rho^{2}\right)  -1$ is exponentially
convex.
\end{proof}

\begin{proof}[ of Theorem \ref{qjsd:qubits}]
Follows directly from Lemma \ref{qjsdnsdlem} and Theorem \ref{theorem:1}.
\end{proof}

\subsection{Metric properties of $\QJD_{\alpha}$ for pure-states}

Here we prove that $\QJD_{\alpha}$ is the square of a metric when restricted
to pairs of pure-states. For a Hilbert space of dimension $d$ we denote the
set of pure-states as $P(\mathcal{H}_{d})$.

\begin{theorem}
\label{qjsd:pure} For $\alpha\in(0,2]$, the space $\big(P(\mathcal{H}%
_{d}),\QJD_{\alpha}^{1/2}\big)$ can be isometrically embedded in a real
separable Hilbert space.
%Let $\alpha\in(0,2]$. Then, there exists a bijection $\Phi$ between $P(\mathcal{H}_d)$ and a subset $\mathcal{H}_{\QJD_{\alpha}}$ of a real separable Hilbert space $\mathcal{H}$ with norm $\|\cdot\|$, such that for all
%$(\rho,\sigma)\in P(\mathcal{H}_d)\times P(\mathcal{H}_d)$, we have that ${\QJD_{\alpha
%}(\rho,\sigma)=\Vert\Phi(\rho)-\Phi(\sigma)\Vert^{2}}$.

\end{theorem}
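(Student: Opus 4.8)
The plan is to reduce, via Theorem \ref{theorem:1}, to a statement of negative type: since $(P(\mathcal{H}_d),\QJD_\alpha^{1/2})$ embeds isometrically in a real separable Hilbert space if and only if $(P(\mathcal{H}_d),\QJD_\alpha)$ is of negative type, it suffices to prove the latter. I would then follow the proof of Lemma \ref{qjsdnsdlem} almost verbatim, exploiting the one structural feature special to pure states: the average of two rank-one states has rank at most two. Concretely, every pure state $\rho$ has $S_\alpha(\rho)=0$, so for pure $\rho,\sigma$ we have $\QJD_\alpha(\rho,\sigma)=S_\alpha(\frac{\rho+\sigma}{2})$; as in Lemmas \ref{jsdnsdlem} and \ref{qjsdnsdlem} the single-variable terms contribute nothing to the negative-definiteness sum. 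Using that Lemma \ref{powerrep} holds for matrices, and treating $\alpha\in(0,1)$ and $\alpha\in(1,2)$ as there (with $\alpha=1,2$ following by continuity), the task reduces to showing that $\rho\curvearrowright\mbox{\rm Tr}(\exp(-t\rho))$ is exponentially convex on $P(\mathcal{H}_d)$, i.e.\ that $(\rho,\sigma)\curvearrowright\mbox{\rm Tr}(\exp(-t\frac{\rho+\sigma}{2}))$ is positive definite for each $t>0$.

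The key computation proceeds as in the qubit case. Writing $\tau=\frac{\rho+\sigma}{2}$, the matrix $\tau$ is supported on the at-most-two-dimensional span of the ranges of $\rho$ and $\sigma$, so its two possibly-nonzero eigenvalues $\mu_\pm$ satisfy $\mu_++\mu_-=1$ and $2\mbox{\rm Tr}(\tau^2)-1=(\mu_+-\mu_-)^2$, while the remaining $d-2$ eigenvalues vanish. Hence, exactly as in \eqref{qeval} and the hyperbolic-cosine expansion in the proof of Lemma \ref{qjsdnsdlem},
\[
\mbox{\rm Tr}(e^{-t\tau})=(d-2)+2e^{-t/2}\sum_{k=0}^{\infty}\frac{t^{2k}}{(2k)!\,4^{k}}\big(2\mbox{\rm Tr}(\tau^2)-1\big)^{k},
\]
the only new feature being the additive constant $d-2\geq0$ from the zero eigenvalues, which is a nonnegative constant kernel and hence trivially positive definite.

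The step that makes the pure-state case cleaner than the qubit case is the following simplification: since $\mbox{\rm Tr}(\rho^2)=\mbox{\rm Tr}(\sigma^2)=1$ for pure states, a direct expansion gives $2\mbox{\rm Tr}(\tau^2)-1=\mbox{\rm Tr}(\rho\sigma)$. But $\mbox{\rm Tr}(\rho\sigma)=\langle\rho,\sigma\rangle$ is the Hilbert--Schmidt inner product, hence a Gram kernel and therefore positive definite, because $\sum_{i,j}c_ic_j\mbox{\rm Tr}(\rho_i\rho_j)=\mbox{\rm Tr}((\sum_i c_i\rho_i)^2)\geq0$. Thus $\rho\curvearrowright 2\mbox{\rm Tr}(\rho^2)-1$ is exponentially convex (the $k=1$ case), and Theorem \ref{productfunctions} upgrades this to $(2\mbox{\rm Tr}(\tau^2)-1)^k$ for every $k\geq0$. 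Since the coefficients $\frac{t^{2k}}{(2k)!\,4^k}$ and the constant $d-2$ are nonnegative, the displayed series is a positive combination of positive definite kernels, hence positive definite; this is precisely the exponential convexity required, so $(P(\mathcal{H}_d),\QJD_\alpha)$ is of negative type and Theorem \ref{theorem:1} concludes the proof.

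I expect the main points needing care to be the reduction through the matrix form of Lemma \ref{powerrep} and the bookkeeping of the zero-eigenvalue constant $d-2$ across the series, together with verifying that the pairwise rank-$\leq 2$ reduction (which lets the qubit $\cosh$-identity be reused term by term) is compatible with the global positive-definiteness condition. Once the identity $2\mbox{\rm Tr}(\tau^2)-1=\mbox{\rm Tr}(\rho\sigma)$ is isolated, positivity is immediate from the Gram/Hilbert--Schmidt structure, so, unlike the qubit argument, no appeal to the trace-distance identity behind Lemma \ref{Hmetricnsd} is needed.
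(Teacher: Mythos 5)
Your proposal is correct and matches the paper's proof of this theorem (via Lemma \ref{qjsdpurensdlem}) essentially step for step: the same reduction through Theorem \ref{theorem:1} and the matrix form of Lemma \ref{powerrep}, the same rank-$\leq 2$ eigenvalue computation giving the $(d-2)$ constant term and the identity $2\Tr(\tau^2)-1=\Tr(\rho\sigma)$, and the same use of Theorem \ref{productfunctions} to reduce to the case $k=1$. Your only deviation is at that last step, where you verify positive definiteness of $(\rho,\sigma)\curvearrowright\Tr(\rho\sigma)$ directly via the Gram identity $\sum_{i,j}c_ic_j\Tr(\rho_i\rho_j)=\Tr\big(\big(\sum_i c_i\rho_i\big)^2\big)\geq 0$, a mild streamlining of the paper's detour through the Hilbert--Schmidt distance identity and Lemma \ref{Hmetricnsd} --- and in fact a slightly cleaner one, since it yields unconstrained positive definiteness at once, whereas Lemma \ref{Hmetricnsd} gives negative definiteness only under the constraint $\sum_i c_i=0$.
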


%\begin{theorem}
%\label{qjsd:pure} For any $d\geq 1$, the function $\QJD$ is negative
%definite on $P(\mathcal{H}_{d})\times P(\mathcal{H}_{d})$. Therefore, there exists a
%subset $\overline{\mathcal{H}}_{\QJD}\subseteq\overline{\mathcal{H}}$ of a real separable Hilbert space $\overline{\mathcal{H}}$ and a
%one-to-one bijection $\Phi$ between $P(\mathcal{H}_{d})$ and $\overline
%{\mathcal{H}}_{\QJD}$ such that, for all $(\rho,\sigma)\in P(\mathcal{H}%
%_{d})\times P(\mathcal{H}_{d})$, $\QJD_{\alpha}(\rho,\sigma)=\Vert\Phi
%(\rho)-\Phi(\sigma)\Vert^{2}$ with $\Vert\cdot\Vert$ denoting the norm in
%$\overline{\mathcal{H}}$.
%\end{theorem}

\begin{lemma}
\label{qjsdpurensdlem} The distance space $\left(
P(\mathcal{H}_{d}),\QJD_{\alpha}\right), \alpha\in(0,2] $ is of negative type.
\end{lemma}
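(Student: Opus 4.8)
The plan is to follow the template of the qubit case (Lemma \ref{qjsdnsdlem}) as closely as possible, the only genuinely new ingredient being an explicit diagonalization of an even mixture of two pure-states. As in the proofs of Lemmas \ref{jsdnsdlem} and \ref{qjsdnsdlem}, I would fix real numbers $(c_i)_{i\le n}$ with $\sum_i c_i = 0$ and pure-states $\rho_1,\dots,\rho_n\in P(\mathcal{H}_d)$. Since $S_\alpha(\rho_i)=0$ for a pure-state and since $\sum_{i,j}c_ic_jf(x_i)=0$ for any single-variable $f$, the only term of $\QJD_\alpha$ that survives is $S_\alpha\big(\tfrac{\rho_i+\rho_j}{2}\big)$; discarding the additive constant (which contributes $\tfrac{1}{\alpha-1}(\sum_i c_i)^2=0$), the claim reduces to controlling the sign of $\sum_{i,j}c_ic_j\,\Tr\big((\tfrac{\rho_i+\rho_j}{2})^\alpha\big)$. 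Invoking the matrix form of Lemma \ref{powerrep} and taking traces, the terms $\Tr(I)=d$ and $t\,\Tr(\tfrac{\rho_i+\rho_j}{2})=t$ that accompany $\Tr(e^{-t(\rho_i+\rho_j)/2})$ are again killed by $\sum_i c_i=0$, so, exactly as in the qubit proof, everything comes down to showing that the kernel $(\rho,\sigma)\mapsto\Tr\!\big(e^{-t(\rho+\sigma)/2}\big)$ is positive definite on $P(\mathcal{H}_d)$ for each $t>0$.

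The new step is to evaluate this kernel. For pure-states $\rho,\sigma$ the matrix $\tfrac{\rho+\sigma}{2}$ has rank at most two; its nonzero eigenvalues $\mu_\pm$ satisfy $\mu_++\mu_-=1$ and $\mu_+^2+\mu_-^2=\Tr\big((\tfrac{\rho+\sigma}{2})^2\big)=\tfrac{1+\Tr(\rho\sigma)}{2}$, whence a short calculation mirroring \eqref{qeval} gives $\mu_\pm=\tfrac12\big(1\pm\Tr(\rho\sigma)^{1/2}\big)$, the remaining $d-2$ eigenvalues being zero. Consequently, using the Taylor expansion of $\cosh$,
\[
\Tr\!\big(e^{-t(\rho+\sigma)/2}\big)=2e^{-t/2}\cosh\!\Big(\tfrac{t}{2}\Tr(\rho\sigma)^{1/2}\Big)+(d-2)=2e^{-t/2}\sum_{k=0}^\infty\frac{t^{2k}}{(2k)!\,4^k}\,\big(\Tr(\rho\sigma)\big)^{k}+(d-2).
\]
Since the constant $d-2\ge 0$ gives a nonnegative (hence positive-definite) kernel and $2e^{-t/2}>0$, positive definiteness of the whole kernel follows once we know that $(\rho,\sigma)\mapsto\big(\Tr(\rho\sigma)\big)^{k}$ is positive definite for every $k\ge 0$.

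This last point is where the Hilbert-space structure enters, and it reuses the qubit argument. For an even mixture one has $2\,\Tr\big((\tfrac{\rho+\sigma}{2})^2\big)-1=\Tr(\rho\sigma)$, and the proof of Lemma \ref{qjsdnsdlem} already shows, via the identity $\|\rho-\sigma\|_2^2=2(\Tr\rho^2+\Tr\sigma^2)-\Tr((\rho+\sigma)^2)$ together with Lemma \ref{Hmetricnsd}, that $\rho\mapsto 2\,\Tr(\rho^2)-1$ is exponentially convex; equivalently, $\big[\Tr(\rho_i\rho_j)\big]_{ij}$ is the Gram matrix of the $\rho_i$ in the Hilbert-Schmidt inner product and is therefore positive semidefinite. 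Theorem \ref{productfunctions} then upgrades this to positive definiteness of $\big(\Tr(\rho\sigma)\big)^{k}$ for all $k$, the case $k=0$ being the constant kernel $1$. Assembling the pieces through the integral of Lemma \ref{powerrep}, the sign of $\frac{1}{\Gamma(-\alpha)}$ combines with that of $\frac{-1}{\alpha-1}$ so that $\sum_{i,j}c_ic_j\QJD_\alpha(\rho_i,\rho_j)\le 0$ for $\alpha\in(0,1)$ and for $\alpha\in(1,2)$, while the boundary cases $\alpha=1$ and $\alpha=2$ follow by continuity.

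The main obstacle is the eigenvalue computation for the two-pure-state mixture: it is precisely the fact that the nonzero spectrum is governed by the single scalar $\Tr(\rho\sigma)$ that lets the general-dimensional problem collapse onto the already-solved qubit kernel. Without such a reduction, as in the general mixed-state case, the non-commutativity of $\rho$ and $\sigma$ prevents $e^{-t(\rho+\sigma)/2}$ from factoring, and the factorization $\sum_{i,j}c_ic_j e^{-t(x_i+x_j)}=(\sum_i c_i e^{-tx_i})^2$ underlying the classical proof of Lemma \ref{jsdnsdlem} is no longer available.
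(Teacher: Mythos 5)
Your proof is correct and follows essentially the same route as the paper's: reduce via the matrix form of Lemma \ref{powerrep} to positive definiteness of $(\rho,\sigma)\mapsto\Tr\big(e^{-t(\rho+\sigma)/2}\big)$, diagonalize the rank-two mixture to get $\mu_{\pm}=\tfrac12\big(1\pm\Tr(\rho\sigma)^{1/2}\big)$, expand the $\cosh$, and use Theorem \ref{productfunctions} to reduce to positive definiteness of $\Tr(\rho\sigma)$, which you (like the paper) obtain from the Hilbert--Schmidt structure. Your direct observation that $\big[\Tr(\rho_i\rho_j)\big]_{ij}$ is a Hilbert--Schmidt Gram matrix is a mild streamlining of the paper's detour through Lemma \ref{Hmetricnsd}, and your explicit sign bookkeeping with $\Gamma(-\alpha)$ and $\tfrac{1}{\alpha-1}$ is, if anything, more careful than the original.
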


\begin{proof}
Using the same techniques as in Theorem \ref{theorem:2}, we have to prove that for $\rho\in P(\mathcal{H}_d)$, the function
$\rho \curvearrowright\mbox{\rm Tr}\big(\exp(
-t\rho)\big)  $ is exponentially convex.
For $\rho_1,\rho_2\in P(\mathcal{H}_d)$ such that $\rho_1\not =\rho_2$, the matrix
$\frac{\rho_1+ \rho_2}{2}$ has two non-zero eigenvalues, $\lambda_{+}$ and
$\lambda_{-}$, which can be calculated in the same way as above. In this case
(\ref{qeval}) reduces to
\[
\lambda_{\pm} = \frac{1}{2} \pm\frac{1}{2}\big( \mbox{\rm Tr}(\rho_1\cdot
\rho_2)\big) ^{1/2}.
\]
When we plug this into $\mbox{\rm Tr}\big(\exp(-t(\rho_1+ \rho_2))\big)$, we get
\beqrn
\mbox{\rm Tr}\left( e^{-2t\left( \frac{\rho_1+ \rho_2}{2}\right) }\right) &=& (n-2) \\
&& + 2e^{-t}\cosh\left( t\left( \mbox{\rm Tr}(\rho_1\cdot\rho_2)\right)^{1/2}\right) \\
&=& (n-2)\\
&& + 2e^{-t}\sum_{k=0}^{\infty}\frac{t^{2k}\left( \mbox{\rm Tr}(\rho_1
\cdot\rho_2)\right) ^{k}}{(2k)!},
\eeqrn
where the $(n-2)$ term comes from the fact that ${n-2}$ of the eigenvalues are
zero. We need to prove that $\left(  \rho_1,\rho_2\right)  \curvearrowright
\left( \mbox{\rm Tr}(\rho_1\cdot\rho_2)\right) ^{k}$ is positive definite
for all integers $k\geq0$. But Theorem~\ref{productfunctions} implies that we only need to
prove it for $k=1$. Appealing to the trace distance, we have
\[
\|\rho_1-\rho_2\|_{1}^{2} = \mbox{\rm Tr}\rho_1^{2} + \mbox{\rm Tr}\rho_2^{2} -
2\mbox{\rm Tr}(\rho_1\cdot\rho_2),
\]
Since, by Lemma \ref{Hmetricnsd}, this is negative definite, the result follows.
\end{proof}

\begin{proof}[ of Theorem \ref{qjsd:pure}]
Follows directly from Lemma \ref{qjsdpurensdlem} and Theorem \ref{theorem:1}.
\end{proof}

\subsection{Counter examples}

\subsubsection{Metric space counter example for $\alpha\in(2,3)$.}

To see that $\JD_{\alpha}$, and hence $\QJD_{\alpha}$, 
is not the square of a metric for
all~$\alpha$ we check the triangle inequality for the three probability
vectors $P=\left(  0,1\right)  ,Q=\left(  1/2,1/2\right)  $ and $R=\left(
1,0\right)  .$ We have
\begin{align*}
\JD_{\alpha}\left(  P,Q\right)    & =\JD_{\alpha}\left(  Q,R\right)  \\
& =S_{\alpha}\left(  1/4,3/4\right)  -\frac{S_{\alpha}\left(  1/2,1/2\right)
}{2}%
\end{align*}
and%
\[
\JD_{\alpha}\left(  P,R\right)  =S_{\alpha}\left(  1/2,1/2\right)  .
\]
The triangle inequality is equivalent to the inequality%
\begin{align*}
0 &  \geq-2\JD_{\alpha}\left(  P,Q\right)  -2\JD_{\alpha}\left(  Q,R\right)
+\JD_{\alpha}\left(  P,R\right)  \\
&  =-4\left(  S_{\alpha}\left(  1/4,3/4\right)  -\frac{S_{\alpha}\left(
1/2,1/2\right)  }{2}\right)  +S_{\alpha}\left(  1/2,1/2\right)  \\
&  =3S_{\alpha}\left(  1/2,1/2\right)  -4S_{\alpha}\left(  1/4,3/4\right)  \\
&  =3\frac{1-2\left(  1/2\right)  ^{\alpha}}{\alpha-1}-4\frac{1-\left(
1/4\right)  ^{\alpha}-\left(  3/4\right)  ^{\alpha}}{\alpha-1}\\
&  =\frac{4\left(  1/4\right)  ^{\alpha}+4\left(  3/4\right)  ^{\alpha
}-6\left(  1/2\right)  ^{\alpha}-1}{\alpha-1}.
\end{align*}
We make the substitution $x=\left(  1/2\right)  ^{\alpha}$ and assume
$\alpha>1$ so the inequality is equivalent to%
\[
4x^{2}+4x^{\frac{\ln4-\ln3}{\ln2}}-6x-1\leq0.
\]
Define the function
\[
f\left(  x\right)  =4x^{2}+4x^{2-\frac{\ln3}{\ln2}}-6x-1.
\]
Then its first and second derivatives are given by
\begin{align*}
f^{\prime}\left(  x\right)   &  =8x+4\left(  2-\frac{\ln3}{\ln2}\right)
x^{1-\frac{\ln3}{\ln2}}-6\\
f^{\prime\prime}\left(  x\right)   &  =8+4\left(  2-\frac{\ln3}{\ln2}\right)
\left(  1-\frac{\ln3}{\ln2}\right)  x^{-\frac{\ln3}{\ln2}}%
\end{align*}
and we see that $f^{\prime\prime}\left(  x\right)  =0$ has exactly one
solution. Therefore $f$ has exactly one infliction point and the equation
$f\left(  x\right)  =0$ has at most three solutions. Therefore the equation%
\[
4\left(  1/4\right)  ^{\alpha}+4\left(  3/4\right)  ^{\alpha}-6\left(
1/2\right)  ^{\alpha}-1=0
\]
has at most three solutions. It is straightforward to check that $\alpha=1,$
$\alpha=2$ and $\alpha=3$ are solutions, so these are the only ones. Therefore
the sign of%
\[
\frac{4\left(  1/4\right)  ^{\alpha}+4\left(  3/4\right)  ^{\alpha}-6\left(
1/2\right)  ^{\alpha}-1}{\alpha-1}%
\]
is constant in the interval $(2,3)$ and plugging in any number will show that
it is negative in this interval. Hence $\JD_{\alpha}$ cannot be a square of a
metric for $\alpha\in(2,3).$

\subsubsection{Counter examples for Hilbert space embeddability for $\alpha
\in\left(  \frac{7}{2},\infty\right)  .$}

In the previous paragraph we showed that $\JD_{\alpha}$ and $\QJD_{\alpha}$
are not the squares of metric functions for $\alpha\in(2,3)$. Hence, for
$\alpha$ in this interval, Hilbert space embeddings are not possible. Here we
prove a weaker result for $\alpha\in(\frac{7}{2},\infty)$, using the
Cayley-Menger determinant.

\begin{theorem}
The space 
\[
\left(  \mathcal{B}_{+}^{1}(\mathcal{H}_{d}), (\JD_{\alpha})^{\frac12}\right)  
\]
is \emph{not} Hilbert space embeddable for $\alpha$ in the interval $\left(
\frac{7}{2},\infty\right)  .$
\end{theorem}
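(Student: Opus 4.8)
The plan is to prove non-embeddability by contradiction with the Cayley--Menger criterion, exhibiting one finite configuration whose determinant carries the forbidden sign. By Theorem~\ref{theorem:1}, the space $\big(\mathcal{B}_+^1(\mathcal{H}_d),(\JD_\alpha)^{1/2}\big)$ is isometrically Hilbert-space embeddable if and only if $\big(\mathcal{B}_+^1(\mathcal{H}_d),\JD_\alpha\big)$ is of negative type, and by Proposition~\ref{mengerprop} the latter forces $(-1)^{|Y|}\det\CM(Y,\JD_\alpha)\ge 0$ for \emph{every} finite subset $Y$. Hence it suffices to produce a single finite $Y$ for which this inequality fails. Since $\JD_\alpha$ evaluated on mutually commuting (simultaneously diagonal) density matrices agrees with the classical $\JD_\alpha$ of the corresponding probability vectors, I would search for the counterexample among diagonal states, that is among ordinary probability distributions, and then view them as density matrices on $\mathcal{H}_d$; because negative type is inherited by subsets, any classical counterexample lifts verbatim to $\mathcal{B}_+^1(\mathcal{H}_d)$.

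Pseudometric spaces on three points are automatically $\ell_2$-embeddable --- the computation around \eqref{CMHeron} shows a three-point Cayley--Menger determinant is governed purely by the triangle inequality --- and since the metric property itself may persist for $\alpha>7/2$, no three-point set can detect the obstruction. I would therefore take $|Y|=4$, where the target inequality reduces to $\det\CM(Y,\JD_\alpha)<0$. The natural choice is four \emph{binary} distributions $P_i=(p_i,1-p_i)$ in a symmetric arrangement, e.g.\ $p\in\{0,a,1-a,1\}$: each pairwise entry then has the explicit form $S_\alpha\!\big(\tfrac{p_i+p_j}{2},1-\tfrac{p_i+p_j}{2}\big)-\tfrac12 s_\alpha(p_i)-\tfrac12 s_\alpha(p_j)$, built from powers of the coordinates, and the reflection symmetry $p\mapsto 1-p$ collapses the $5\times5$ determinant to a closed-form transcendental function $\Delta(\alpha)$ of $\alpha$ (and of the free parameter $a$).

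The crux is the sign analysis: one must show $\Delta(\alpha)<0$ exactly on $(7/2,\infty)$. Here I would reuse the device already employed in the $\alpha\in(2,3)$ counterexample, substituting a single exponential variable such as $x=(1/2)^\alpha$ to rewrite $\Delta$ as a combination of a few power terms, and then bounding its number of zeros by examining its first and second derivatives (counting inflection points, as was done to reduce the equation $4(1/4)^\alpha+4(3/4)^\alpha-6(1/2)^\alpha-1=0$ to its three integer roots). Verifying that $\alpha=7/2$ is a root and that $\Delta$ retains a constant negative sign throughout $(7/2,\infty)$ then completes the classical statement, which the diagonal embedding promotes to $\mathcal{B}_+^1(\mathcal{H}_d)$.

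I expect the difficulty to reside entirely in this last step. First, the four points must be chosen so that $\Delta$ is simultaneously tractable \emph{and} sharp enough to place the crossover precisely at $7/2$: a carelessly chosen (e.g.\ centrally clustered) configuration only exposes a strictly larger threshold, so the off-center parameter $a$ must be tuned to make the obstruction appear as early as possible. Second, because $(7/2,\infty)$ is unbounded, constancy of the sign cannot be checked by evaluation; it hinges on the root-counting argument guaranteeing that $\Delta$ undergoes no further sign changes, and establishing that zero-count bound rigorously --- rather than the mechanical determinant expansion --- is the genuinely delicate part of the proof.
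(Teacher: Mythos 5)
Your skeleton is the same as the paper's --- reduce to classical binary distributions (which sit inside $\mathcal{B}_{+}^{1}(\mathcal{H}_{d})$ as commuting diagonal states), invoke Theorem~\ref{theorem:1} and Proposition~\ref{mengerprop}, note that three points can never witness the failure, and exhibit a four-point set $Y$ with $\det\CM(Y,\JD_{\alpha})<0$. But the crux of the argument, the choice of configuration and the sign analysis, is both left unexecuted and aimed in the wrong direction. You reject ``centrally clustered'' configurations as giving only a strictly larger threshold; the paper does exactly the opposite. It takes the four binary distributions with first coordinates $\tfrac12-3\varepsilon,\ \tfrac12-\varepsilon,\ \tfrac12+\varepsilon,\ \tfrac12+3\varepsilon$ and sends $\varepsilon\to0$. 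Taylor-expanding $s_{\alpha}(\tfrac12+\varepsilon)$ to sixth order shows
\begin{equation*}
\det\CM=\tfrac18\,s_{\alpha}^{(4)}\big(\tfrac12\big)\Big(\big(s_{\alpha}^{(4)}\big(\tfrac12\big)\big)^{2}-s_{\alpha}''\big(\tfrac12\big)\,s_{\alpha}^{(6)}\big(\tfrac12\big)\Big)\varepsilon^{12}+O(\varepsilon^{14}),
\end{equation*}
so embeddability for all small $\varepsilon$ forces $\big(s_{\alpha}^{(4)}(\tfrac12)\big)^{2}-s_{\alpha}''(\tfrac12)s_{\alpha}^{(6)}(\tfrac12)\leq0$, and since $s_{\alpha}^{(2n)}(\tfrac12)=-\alpha^{\underline{2n}}2^{2n+1-\alpha}$ this becomes the purely polynomial condition $4\alpha^{\underline{2}}(\alpha-2)(\alpha-3)\big(\alpha-\tfrac72\big)\leq0$. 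The infinitesimal limit is precisely what makes the threshold land exactly at $\tfrac72$ and what disposes of the unbounded interval in one stroke: the factored polynomial is positive on all of $(\tfrac72,\infty)$, so no root counting, no tuning of a parameter, and no evaluation argument is needed.

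By contrast, your concrete plan is unlikely to be salvageable as stated. With $p\in\{0,a,1-a,1\}$ the determinant entries involve powers with incommensurable bases --- $a^{\alpha}$, $(1-a)^{\alpha}$, $\big(\tfrac{a}{2}\big)^{\alpha}$, $\big(\tfrac{1\pm a}{2}\big)^{\alpha}$, $2^{-\alpha}$ --- so the single substitution $x=(1/2)^{\alpha}$ does not collapse $\Delta(\alpha)$ to a short power sum in one variable, and the inflection-counting device that worked for $4(1/4)^{\alpha}+4(3/4)^{\alpha}-6(1/2)^{\alpha}-1$ has no evident analogue for a $5\times5$ determinant of such terms. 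Moreover there is no reason a \emph{fixed} configuration should have its sign change exactly at $\tfrac72$; generically the failure threshold of a fixed four-point set exceeds $\tfrac72$ and the sharp value is attained only in the clustering limit you excluded. So the proposal has a genuine gap at the step you yourself identify as decisive: the missing idea is the $\varepsilon\to0$ Taylor expansion, which converts the transcendental sign question into a finite, explicitly factorable polynomial inequality in $\alpha$.
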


Note that this does not exclude the possibility that $\JD_{\alpha}$ is the
square of a metric and that the same result holds for $\QJD_{\alpha},$

\begin{proof}
Consider the four distributions 
\begin{align*}
&\left(  \frac{1}{2}-3\varepsilon,\frac{1}{2}+3\varepsilon\right),\\
&\left(  \frac{1}{2}-\varepsilon,\frac{1}{2}+\varepsilon\right),\\
&\left(  \frac{1}{2}+\varepsilon,\frac{1}{2}-\varepsilon\right),\\
&\left(  \frac{1}{2}+3\varepsilon,\frac{1}{2}-3\varepsilon\right).
\end{align*}
 Then the
Cayley-Menger determinant is
\[
\left\vert
\begin{array}
[c]{ccccc}%
s_{\alpha}\left(  \frac{1}{2}-3\varepsilon\right)   & s_{\alpha}\left(
\frac{1}{2}-2\varepsilon\right)   & s_{\alpha}\left(  \frac{1}{2}%
-\varepsilon\right)   & s_{\alpha}\left(  \frac{1}{2}\right)   & 1\\
s_{\alpha}\left(  \frac{1}{2}-2\varepsilon\right)   & s_{\alpha}\left(
\frac{1}{2}-\varepsilon\right)   & s_{\alpha}\left(  \frac{1}{2}\right)   &
s_{\alpha}\left(  \frac{1}{2}+\varepsilon\right)   & 1\\
s_{\alpha}\left(  \frac{1}{2}-\varepsilon\right)   & s_{\alpha}\left(
\frac{1}{2}\right)   & s_{\alpha}\left(  \frac{1}{2}+\varepsilon\right)   &
s_{\alpha,2}\left(  \frac{1}{2}+2\varepsilon\right)   & 1\\
s_{\alpha}\left(  \frac{1}{2}\right)   & s_{\alpha}\left(  \frac{1}%
{2}+\varepsilon\right)   & s_{\alpha}\left(  \frac{1}{2}+2\varepsilon\right)
& s_{\alpha}\left(  \frac{1}{2}+3\varepsilon\right)   & 1\\
1 & 1 & 1 & 1 & 0
\end{array}
\right\vert
\]
and if the four points are Hilbert space embeddable then this determinant is
non-negative. The function $\varepsilon\rightarrow s_{\alpha}\left(  \frac
{1}{2}+\varepsilon\right)  $ has a Taylor expansion given by
\begin{multline}
s_{\alpha}\left(  \frac{1}{2}+\varepsilon\right)  =s_{\alpha}\left(  \frac
{1}{2}\right)  +\frac{s_{\alpha}^{\prime\prime}\left(  \frac{1}{2}\right)
}{2}\varepsilon^{2} +\\
\frac{s_{\alpha}^{\left(  4\right)  }\left(  \frac{1}
{2}\right)  }{24}\varepsilon^{4}+ \frac{s_{\alpha}^{\left(  6\right)  }\left(
\frac{1}{2}\right)  }{720}\varepsilon^{4}+\varepsilon^{8}f\left(
\varepsilon\right),
\end{multline}
where $f$ is some continous function of $\varepsilon.$ This can be used to get
the expansion of the Cayley-Menger determinant:
\begin{multline*}
\CM=\frac{1}{8}s_{\alpha}^{\left(  4\right)  }\left(  \frac{1}{2}\right)
\left(  \left(  s_{\alpha}^{\left(  4\right)  }\left(  \frac{1}{2}\right)
\right.\right)  ^{2}-\\
\left. s_{\alpha}^{\prime\prime}\left(  \frac{1}{2}\right)  h_{\alpha
}^{\left(  6\right)  }\left(  \frac{1}{2}\right)  \right)  \varepsilon
^{12}+\varepsilon^{14}g\left(  \varepsilon\right)
\end{multline*}
for some continuous function $g$ \footnote{The calculation of the Taylor expansion involves a lot of computations but are easily performed using Maple or similar symbol manipulation program.}. We have the following
formula for the even derivatives of $s_{\alpha}:$%
\[
s_{\alpha}^{\left(  2n\right)  }\left(  x\right)  =-\alpha^{\underline{2n}%
}\left(  x^{\alpha-2n}+\left(  1-x\right)  ^{\alpha-2n}\right)
\]
and
\[
s_{\alpha}^{\left(  2n\right)  }\left(  \frac{1}{2}\right)  =-\alpha
^{\underline{2n}}2^{2n+1-\alpha}.
\]
If the Cayley-Menger determinant is positive for all small~$\varepsilon$ then
\[
\left(  s_{\alpha}^{\left(  4\right)  }\left(  \frac{1}{2}\right)  \right)
^{2}-s_{\alpha}^{\prime\prime}\left(  \frac{1}{2}\right)  s_{\alpha}^{\left(
6\right)  }\left(  \frac{1}{2}\right)  \leq0
\]
or equivalently%
\[
\left(  -\alpha^{\underline{4}}2^{5-\alpha}\right)  ^{2}-\left(
-\alpha^{\underline{2}}2^{3-\alpha}\right)  \left(  -\alpha^{\underline{6}%
}2^{7-\alpha}\right)  \leq0
\]
and
\begin{align*}
0 &  \geq\left(  \alpha^{\underline{4}}\right)  ^{2}-\left(  \alpha
^{\underline{2}}\right)  \left(  \alpha^{\underline{6}}\right)  \\
&  =\alpha^{\underline{2}}\alpha^{\underline{4}}\left(  \left(  \alpha
-2\right)  \left(  \alpha-3\right)  -\left(  \alpha-4\right)  \left(
\alpha-5\right)  \right)  \\
%&  =\alpha^{\underline{2}}\alpha^{\underline{4}}\left(  4\alpha-14\right)  \\
&  =4\alpha^{\underline{2}}\left(  \alpha-2\right)  \left(  \alpha-3\right)
\left(  \alpha-\frac{7}{2}\right)  .
\end{align*}
Hence, the Cayley-Menger determinant is non-negative only for the intervals $\left[0, 2\right]  $ and $\left[  3, \frac{7}%
{2}\right]  .$
\end{proof}

\section{Relation to total variation and trace distance}\label{SecUpDown}

The results of Section \ref{metricsec} indicate that interesting geometric
properties are associated with $\JD_{\alpha}$ and $\QJD_{\alpha}$ when
$\alpha\in\left(  0,2\right]  .$

\subsection{Bounds on $\JD_{\alpha}$}

\label{vjsd}

For $\alpha\in(0,2].$ we bound $\JD_{\alpha}$ as follows:

%\begin{align}
%U_{n}(v) &  =\sup\{\JD(P,Q)\mid P,Q\in M_{+}^{1}%
%(n),V(P,Q)=v\}\,,\label{eq:18}\\
%L_{n}(v) &  =\inf\{\JD(P,Q)\mid P,Q\in M_{+}^{1}(n),V(P,Q)=v\}\,,
%\end{align}
%for the upper and lower bound, respectively.

\begin{theorem}\label{lowerupperthm}
Let $P$ and $Q$ be probability distributions in $M_{+}^{1}(n),$ and let
\[
v:=\tfrac{1}{2}\sum_{i}\left\vert p_{i}-q_{i}\right\vert \in\lbrack0,2]
\]
denote their total variation. Then for $\alpha\in\left(  0,2\right]  ,$ we
have $L\leq\JD_{\alpha}(P,Q)\leq U,$ where:

\begin{itemize}
\item For every $n\geq2,$ $L$ is given by
\begin{equation}
L(P,Q)=s_{\alpha}\left(  \tfrac{1}{2}\right)  -s_{\alpha}\left(  \tfrac
{1}{2}+\tfrac{v}{4}\right)  . \label{eq:27}%
\end{equation}
%where $s_{\alpha}(p) = S_{\alpha}(p,1-p)$ is the binary entropy function.

\item For every $n\geq3,$ $U$ is given by
\begin{equation}
U_{n}(P,Q)=\frac{1}{\alpha-1}\left(  \frac{1}{2}-\frac{1}{2^{\alpha}}\right)
\Vert P-Q\Vert_{\alpha}^{\alpha}. \label{eq:28}%
\end{equation}

\item For $n=2,$ $U$ is given by the tighter quantity
\begin{equation}
U_{2}(P,Q)=s_{\alpha}\left(  \tfrac{v}{4}\right)  -\frac{1}{2}S_{\alpha
,2}\left(  \tfrac{v}{2}\right)  . \label{eq:29}%
\end{equation}

\end{itemize}
\end{theorem}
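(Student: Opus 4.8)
The plan is to reduce all three bounds to the single-letter contributions of $\JD_{\alpha}$. Expanding the definition in \eqref{eq:8} and cancelling the constant $\tfrac{1}{\alpha-1}$-terms gives
\[
\JD_{\alpha}(P,Q)=\sum_{i=1}^{n}\theta(p_{i},q_{i}),\qquad
\theta(p,q):=\frac{1}{\alpha-1}\Big(\tfrac{p^{\alpha}+q^{\alpha}}{2}-\big(\tfrac{p+q}{2}\big)^{\alpha}\Big),
\]
where $\theta\ge0$ (by concavity/convexity of $x\mapsto x^{\alpha}$ according as $\alpha\lessgtr1$), $\theta$ is symmetric in its arguments, and $\theta$ is homogeneous of degree $\alpha$. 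I would first record these elementary properties, since every bound is read off from this decomposition.

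For $n=2$ both $L$ and $U_{2}$ follow from a one-variable convexity argument. After possibly swapping $P$ and $Q$, write $q=t$ and $p=t+\tfrac v2$, so that the midpoint mass is $t+\tfrac v4$ and $\sum_i|p_i-q_i|=v$, and set
\[
f(t)=s_{\alpha}\big(t+\tfrac v4\big)-\tfrac12 s_{\alpha}\big(t+\tfrac v2\big)-\tfrac12 s_{\alpha}(t),\qquad t\in\big[0,1-\tfrac v2\big].
\]
Using $s_{\alpha}(x)=s_{\alpha}(1-x)$ one checks that $f$ is symmetric about $t^{\ast}=\tfrac12-\tfrac v4$. Differentiating twice and using that $x\mapsto x^{\alpha-2}$ is convex (the exponent $\alpha-2\le0$ for $\alpha\in(0,2]$), together with the fact that $t+\tfrac v4$ is the midpoint of $t$ and $t+\tfrac v2$, shows $f''\ge0$, so $f$ is convex on its interval. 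Its minimum is therefore at the centre $t^{\ast}$, where the midpoint mass equals $\tfrac12$ and $f(t^{\ast})=s_{\alpha}(\tfrac12)-s_{\alpha}(\tfrac12+\tfrac v4)=L$, and its maximum is at the endpoint $t=0$, where $f(0)=s_{\alpha}(\tfrac v4)-\tfrac12 s_{\alpha}(\tfrac v2)=U_{2}$. This settles $n=2$ for both inequalities.

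For the upper bound when $n\ge3$ I would prove the single-letter estimate
\[
\theta(p,q)\le\frac{1}{\alpha-1}\Big(\tfrac12-\tfrac1{2^{\alpha}}\Big)|p-q|^{\alpha}
\]
and sum it over $i$ to obtain $U_{n}$. By homogeneity and symmetry this reduces (put $p\ge q$ and $r=q/p\in[0,1]$) to a single-variable inequality that is an equality at the endpoints $r=0$ and $r=1$; the interior is disposed of by checking the sign of one derivative, a routine computation.

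The lower bound for general $n$ is the step I expect to be the main obstacle. The natural route is to reduce to the binary case by a total-variation-preserving coarse-graining: merge all letters with $p_i\ge q_i$ into one symbol and all letters with $p_i<q_i$ into another, producing two-point distributions $(P',Q')$ of the same total variation. If $\JD_{\alpha}$ does not increase under this merging, then $\JD_{\alpha}(P,Q)\ge\JD_{\alpha}(P',Q')\ge L$ by the two-point bound already established. The merging step is exactly the grouping (superadditivity) inequality
\[
\theta(p_{1},q_{1})+\theta(p_{2},q_{2})\ge\theta(p_{1}+p_{2},q_{1}+q_{2})
\]
for pairs with $p_{1}-q_{1}$ and $p_{2}-q_{2}$ of the same sign. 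Establishing this — and pinning down precisely for which $\alpha$ it holds — is the crux, since a degree-$\alpha$ homogeneous function with $\alpha\neq1$ is neither automatically sub- nor super-additive; this is where I would concentrate the effort, and where the admissible range of $\alpha$ must be scrutinised most carefully.
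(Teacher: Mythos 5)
Your decomposition $\JD_{\alpha}(P,Q)=\sum_{i}\theta(p_{i},q_{i})$ and your $n=2$ analysis are correct and complete: $f$ is indeed symmetric about $t^{\ast}=\tfrac12-\tfrac v4$, and $f''\ge0$ follows from midpoint convexity of $x\mapsto x^{\alpha-2}$ applied to the pairs $\{t,\,t+\tfrac v2\}$ and $\{1-t-\tfrac v2,\,1-t\}$, so the minimum $L$ sits at the centre and the maximum $U_{2}$ at the endpoints, for all $\alpha\in(0,2]$. This is more elementary than the paper, which obtains the two-point lower bound via permutation averaging plus Burbea--Rao joint convexity (available only for $\alpha\in[1,2]$, so your route even covers more of the stated range) and $U_{2}$ via convexity in $q$ at fixed $p-q$. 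Your single-letter upper bound is also the right mechanism, and is exactly what the paper's construction of $\widetilde P,\widetilde Q$ on $[n]\times[3]$ plus its appeal to ``data processing'' amounts to when unwound: the needed inequality is $\theta(p,q)\le\theta(|p-q|,0)$. One caveat: your $r=q/p$ reduction with ``one derivative check'' is optimistic, since the difference vanishes at both endpoints and you must control a sign change; a cleaner closing move is to fix $d=p-q\ge0$ and note
\begin{equation*}
\frac{d}{dq}\,\theta(q+d,q)=\frac{\alpha}{\alpha-1}\left[\frac{(q+d)^{\alpha-1}+q^{\alpha-1}}{2}-\Big(q+\frac d2\Big)^{\alpha-1}\right]\le0
\end{equation*}
for all $\alpha\in(0,2]$, because $x\mapsto x^{\alpha-1}$ is concave for $\alpha>1$ (bracket nonpositive, prefactor positive) and convex for $\alpha<1$ (bracket nonnegative, prefactor negative).

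The genuine gap is exactly where you predicted, and it is fatal for part of the claimed range: the same-sign grouping inequality $\theta(p_{1},q_{1})+\theta(p_{2},q_{2})\ge\theta(p_{1}+p_{2},q_{1}+q_{2})$ is \emph{false} for $\alpha\in(1,2]$. At $\alpha=2$ one has $\theta(p,q)=(p-q)^{2}/4$, and for same-sign differences $a,b$, $(a+b)^{2}>a^{2}+b^{2}$; more generally, since $\theta$ is $\alpha$-homogeneous, merging two proportional pairs compares $(1+\lambda)^{\alpha}$ against $1+\lambda^{\alpha}$, so merging \emph{increases} $\JD_{\alpha}$ whenever $\alpha>1$. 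Hence your reduction to two points cannot be completed on $(1,2]$; it survives only on $(0,1]$, where at $\alpha=1$ the grouping inequality is the log-sum inequality, and for $\alpha\in(0,1)$ it is plausible but still needs the proof you flagged.

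What you could not know blind is that the paper's own proof founders at the parallel step, and that the general-$n$ lower bound as stated is actually false for $\alpha\in(1,2]$. After averaging over sign-preserving permutations, the paper has distributions constant on the two blocks $\{i:p_{i}\ge q_{i}\}$ and $\{i:p_{i}<q_{i}\}$ and then asserts ``WLOG $n=2$''; but a block of $m$ equal letters contributes $m^{1-\alpha}$ times the $\theta$-contribution of the merged letter, so this recombination is $\JD_{\alpha}$-preserving only at $\alpha=1$. Concretely, take $\alpha=2$, $P$ uniform on $[2m]$, and $q_{i}=(1\pm\epsilon)/2m$ on the two halves: then $\sum_{i}|p_{i}-q_{i}|=\epsilon$ and $\JD_{2}(P,Q)=\epsilon^{2}/(8m)$, whereas $L=\epsilon^{2}/8$ (or $\epsilon^{2}/32$ under the other normalization of $v$, which the paper states inconsistently --- the $\tfrac12$ in the definition clashes with the range $[0,2]$ and with $v=2|p-q|$ in the proof); either way the bound is violated for $m\ge5$. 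So your honest assessment of the crux was exactly right: the missing superadditivity is not merely hard, it fails on $(1,2]$, and there the theorem's lower bound must be restricted to $n=2$, where your convexity argument already proves it.
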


\begin{proof}
We start with the lower bound. Let $\sigma$ denote a permutation of the elements in $\left[  n\right]  $ and let
$\sigma\left(  P\right)  $ denote the probability vector where the point
probabilities have been permuted according to $\sigma$. Clearly, the function $\JD_{\alpha}$ is invariant under such permutations of its arguments:
\beq\label{JDpermuted}
\JD_{\alpha}\big(\sigma(P)  ,\sigma(Q)\big)
=\JD_{\alpha}\left(  P,Q\right).
\eeq
Let $B$ denote the set of permutations $\sigma$ that satisfy
\[
p_{i}\geq q_{i}\Leftrightarrow p_{\sigma\left(  i\right)  }\geq q_{\sigma\left(
i\right)  }%
\]
for all $i\in\left[  n\right].$ Then, by the joint convexity of $\JD_{\alpha}$ for $\alpha\in[1,2]$ (as proved in \cite{Burbea1982}), we have%
\begin{align}
\JD_{\alpha}\left(  P,Q\right)   &  =\frac{1}{\left\vert B\right\vert }%
\sum_{\sigma\in B}\JD_{\alpha}\big(\sigma(P)  ,\sigma(Q)\big)\nonumber  \\
&  \geq\JD_{\alpha}\left(  \frac{1}{\left\vert B\right\vert }\sum_{\sigma\in
B}\sigma\left(  P\right)  ,\frac{1}{\left\vert B\right\vert }\sum_{\sigma\in B}%
\sigma\left(  Q\right)  \right)  .\label{jointconv}
\end{align}
The distributions $\frac{1}{\left\vert B\right\vert }\sum_{\sigma\in B}\sigma\left(
P\right)  $ and $\frac{1}{\left\vert B\right\vert }\sum_{\sigma\in B}\sigma\left(
Q\right)  $ have the property that they are constant on two complementary
sets, namely $\{i\in[n]\mid p_i\geq q_i\}$ and $\{i\in[n]\mid p_i<q_i\}$. Therefore, we may without loss of generality assume that $P$ and $Q$ are
distributions on a two-element set. On a two-element set $P$ and $Q$ can be
parametrized by $P=\left(  p,1-p\right)  $ and $Q=\left(  q,1-q\right).$ If
$\sigma_2$ denotes the transposition of the two elements then
\begin{align*}
v &  =V\left(  \frac{P+\sigma_2\left(  Q\right)  }{2},\frac{Q+\sigma_2\left(  P\right)}{2}\right)  =2\left\vert p-q\right\vert .
\end{align*}
% We also have%
% \[
% \JD_{\alpha}\left(  P,Q\right)  =\JD_{\alpha}\big(\sigma(Q),\sigma(P)\big)  .
% \]
By \eqref{JDpermuted} and \eqref{jointconv}  we get%
\begin{multline*}
\JD_{\alpha}\left(  P,Q\right)     \geq\JD_{\alpha}\left(  \frac{P+\sigma_2\left(
Q\right)  }{2},\frac{Q+\sigma_2\left(  P\right)  }{2}\right)  \\
%&  =\JD_{\alpha}\left(  \left(  \frac{1}{2}+\frac{p-q}{2},\frac{1}{2} -\frac{p-q}{2}\right)  ,\left(  \frac{1}{2}-\frac{p-q}{2},\frac{1}{2} +\frac{p-q}{2}\right)  \right)  \\
  =\JD_{\alpha}\left(  \left(  \frac{1}{2}+\frac{v}{4},\frac{1}{2}-\frac
{v}{4}\right)  ,\left(  \frac{1}{2}-\frac{v}{4},\frac{1}{2}+\frac{v}%
{4}\right)  \right)  \\
  =s_{\alpha}\left(  1/2\right)  -s_{\alpha}\left(  \frac{1}{2}+\frac
{v}{4}\right),
\end{multline*}
and this lower bound is attained for two distributions on a two element set.
Next we derive the general upper bound. Define distribution $\widetilde{P}$ on $[n]\times [3]$ such that for every $i\in[n]$,
\begin{align*}
\widetilde{P}\left(i,1\right)    & =\min\left\{  p_i  ,q_i  \right\}  ,\\
\widetilde{P}\left(  i,2\right)    & =\left\{
\begin{array}
[c]{ll}
p_i-q_i  & \text{if }p_i>q_i\\
0 & \text{otherwise},
\end{array}
\right. \\
\widetilde{P}\left(i,3\right)    & =0,
\end{align*}
and similarly define $\widetilde{Q}$ on $[n]\times [3]$ by
\begin{align*}
\widetilde{Q}\left(i,1\right)    & =\min\left\{ p_i.q_i  \right\}  ,\\
\widetilde{Q}\left(i,2\right)    & =0,\\
\widetilde{Q}\left(i,3\right)    & =\left\{
\begin{array}
[c]{ll}%
q_i-p_i & \text{if }q_i>p_i\\
0 & \text{otherwise}.
\end{array}
\right.
\end{align*}
With these definitions we have $V(\widetilde{P},\widetilde{Q})  =V\left(
P,Q\right)  .$ Using the data processing inequality and the definitions of $\widetilde{P}$ and $\widetilde{Q}$ it is straighforward to verify that
\beqrn
\JD_{\alpha}\left(  P,Q\right)    &\leq& \JD_{\alpha}(\widetilde{P},\widetilde{Q})\\
&=& \frac{1}{\alpha-1}\big(\frac{1}{2} - \frac{1}{2^{\alpha}}\big)\sum_{i=1}^n|p_i-q_i|^{\alpha}.
\eeqrn
This upper bound is attained on a three element set so we have
\[
U_n(P,Q) =\frac{1}{\alpha-1}\big(\frac{1}{2} - \frac{1}{2^{\alpha}}\big)\|P-Q\|_{\alpha}^{\alpha}.
\]
To get a tight upper bound on a two-element set a special analysis is needed. The cases $p>q$ and $p<q$ are treated separately, but the two cases work the same way. We will therefore assume that $p>q.$
On a two-element set parametrize $P$ and $Q$ by $P=\left(
p,1-p\right)  $ and $Q=\left(  q,1-q\right)  .$ In this case we have the linear
constraint $p-q=v/2$. For a
fixed value of $v$, we have that $\JD_{\alpha}$ is a convex function of $q.$
Therefore the maximum is attained by an extreme point, i.e. a distribution
where either $p$ or $q$ is either $0$ or $1.$ Without loss of generality we
may assume that $q=0$ and that $p=v/2.$ This gives%
\[
U_2(P,Q) = s_{\alpha}\left(  \frac{v}{4}\right)  -\frac{s_{\alpha}\left(  \frac{v}{2}\right)  }{2}.\qedhere
\]
% As a conclusion we have the following three cases:%
% \begin{align}
% &  \mbox{Case {\it a}:}\qquad\JD(P,Q)=-\frac{1}{2}H_2\Big(1 + \frac{v}{2}\Big)\label{eq:25a}\\
% &  \mbox{Case {\it b}:}\qquad\JD(P,Q)=\ln2-\frac{1}{2}\big(2-\frac{v}{2}\big)\ln\big(2-\frac{V}{2}\big)+\frac{1}{2}\big(1-\frac{v}{2}\big)\ln\big(1-\frac{v}{2}\big)\label{eq:25b}\\
% &  \mbox{Case {\it c}:}\qquad\JD(P,Q)=\frac{\ln2}{2}\cdot v\,.\label{eq:25c}
% \end{align}
% The functions in \eqref{eq:25a}-\eqref{eq:25c} are related as follows:
% \begin{multline}
% -\frac{1}{2}H_2\Big(1 + \frac{v}{2}\Big)< \ln2-\frac{1}{2}\big(2-\frac{v}{2}\big)\ln\big(2-\frac{v}{2}\big)+\frac
% {1}{2}\big(1-\frac{v}{2}\big)\ln\big(1-\frac{v}{2}\big)< \frac{\ln2}{2}\cdot v\label{eq:26}
% \end{multline}
% This follows by standard considerations. Of course, for $v=0$ or $v=2$,
% equality holds in \eqref{eq:26}. Recalling that \eqref{eq:25c} requires
% $n\geq3$ gives the result.
\end{proof}

It is now straightforward to determine the exact form of the joint range of
$V$ and $\JD_{\alpha}$.

\begin{cor}
The joint range of $V$ and $\JD_{\alpha},$ denoted by $\Delta_{n},$ is a
compact region in the plane bounded by a (Jordan) curve composed of two
curves: The first curve is given by \eqref{eq:27} with $V$ running from 2 to
0. For $n=2$ the second curve is given by \eqref{eq:28} with $v$ running from
0 to 2, and for $n=3$ the second curve is given by \eqref{eq:29} with $v$
running from 0 to 2.
\end{cor}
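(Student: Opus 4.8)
The plan is to realise $\Delta_{n}$ as the image of the continuous map
\[
\Phi:(P,Q)\longmapsto\big(V(P,Q),\JD_{\alpha}(P,Q)\big)
\]
defined on the domain $M_{+}^{1}(n)\times M_{+}^{1}(n)$. Since this domain is compact and $\Phi$ is continuous, $\Delta_{n}=\Phi\big(M_{+}^{1}(n)\times M_{+}^{1}(n)\big)$ is automatically a compact subset of the plane, which settles the compactness assertion for free. The real content is to show that $\Delta_{n}$ equals the region enclosed between the graph of the lower bound $L$ and the graph of the upper bound $U$ of Theorem~\ref{lowerupperthm}, both read as functions of the single variable $v=V(P,Q)$. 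I would do this by describing the vertical slice $\Delta_{n}\cap(\{v\}\times\mathbb{R})$ for each fixed $v\in[0,2]$.

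First I would pin down the slice from above and below. Theorem~\ref{lowerupperthm} gives, for every pair $(P,Q)$ with $V(P,Q)=v$, the two inequalities $L(v)\le\JD_{\alpha}(P,Q)\le U(v)$, where $L$ is \eqref{eq:27} and $U$ is the relevant upper bound (\eqref{eq:28} for $n\ge3$, the tighter \eqref{eq:29} for $n=2$). Moreover both inequalities are \emph{attained}: the proof of the theorem exhibits a two-point configuration realising $L(v)$ and a three-point configuration $(\widetilde{P},\widetilde{Q})$ realising the upper bound. Substituting the extremal configurations turns $U$ into an explicit function of $v$ alone, so the slice over $v$ is contained in the interval $[L(v),U(v)]$ and contains both of its endpoints.

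Next I would fill each slice using the intermediate value theorem. Although the fibre $\{(P,Q):V(P,Q)=v\}$ need not be connected (already for $n=2$ it splits according to the sign of the coordinate differences), both extremal configurations above lie in a single connected piece of it. Concretely, I would exhibit a continuous path $t\mapsto(P_{t},Q_{t})$ inside $M_{+}^{1}(n)\times M_{+}^{1}(n)$ that keeps $V(P_{t},Q_{t})=v$ fixed while deforming a lower-extremal pair into an upper-extremal pair — for $n=2$ the segment on which $p-q=v/2$ already contains both endpoints. Along such a path $\JD_{\alpha}(P_{t},Q_{t})$ varies continuously, so its image is an interval containing $L(v)$ and $U(v)$, hence all of $[L(v),U(v)]$. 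Together with the previous paragraph this gives $\Delta_{n}=\{(v,w):0\le v\le2,\ L(v)\le w\le U(v)\}$.

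Finally, for the Jordan-curve claim I would check that the two boundary graphs meet exactly at the two ends of the $v$-range and are disjoint in between. At $v=0$ one has $P=Q$, so $L(0)=U(0)=0$ and the curves meet at the origin; at $v=2$ a direct substitution into \eqref{eq:27} and the upper bound gives the common value $s_{\alpha}(\tfrac12)$, so they meet again. Since $L(v)<U(v)$ strictly on $(0,2)$, the graph of $L$ run from $v=2$ down to $v=0$, followed by the graph of $U$ run from $v=0$ up to $v=2$, is a simple closed curve bounding $\Delta_{n}$. The step I expect to be the main obstacle is the fibre-filling in paragraph three: producing, for each fixed $v$, an explicit $V$-preserving deformation connecting the lower- and upper-extremal configurations so that the intermediate value theorem applies; the compactness, the bound, and the endpoint matching are then routine.
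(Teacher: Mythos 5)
Your proposal is correct, and it takes a genuinely different route from the paper. The paper does not fill vertical slices: it writes down a single explicit family $(P_t,Q_t)(v)$, the straight-line interpolation between the lower-extremal pair $\bigl(\tfrac{2+v}{4},\tfrac{2-v}{4},0,\dots\bigr),\bigl(\tfrac{2-v}{4},\tfrac{2+v}{4},0,\dots\bigr)$ and the upper-extremal pair $\bigl(1-\tfrac{v}{2},\tfrac{v}{2},0,\dots\bigr),\bigl(1-\tfrac{v}{2},0,\tfrac{v}{2},\dots\bigr)$, views it as a homotopy inside $\Delta_n$ from the lower bounding curve to the upper one, and concludes that the enclosed region ``has no holes.'' Crucially, the paper's interpolation does \emph{not} preserve total variation: along it $V(P_t,Q_t)=\tfrac{v}{2}\bigl(1+|2t-1|\bigr)$, which dips to $v/2$ at $t=\tfrac12$, so the paper cannot argue fiberwise and instead leans (tersely) on a topological sweeping argument via the Jordan curve. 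Your fiberwise intermediate-value argument avoids that topology entirely and buys a sharper conclusion, the exact identity $\Delta_n=\{(v,w):0\le v\le 2,\ L(v)\le w\le U(v)\}$ rather than ``no holes.'' Moreover, the one step you flag as the main obstacle dissolves: since $V$ and $\JD_\alpha$ are permutation invariant, replace the upper-extremal pair by its permuted copy $\bigl(\tfrac{v}{2},0,1-\tfrac{v}{2},0,\dots\bigr),\bigl(0,\tfrac{v}{2},1-\tfrac{v}{2},0,\dots\bigr)$; then both extremal pairs have the \emph{identical} difference vector $\bigl(\tfrac{v}{2},-\tfrac{v}{2},0,\dots,0\bigr)$, so linear interpolation keeps $P_t-Q_t$, and hence $V$, constant, and for $n=2$ your observation that both endpoints lie on the line $p-q=v/2$ already suffices. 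Your compactness argument (continuous image of a compact domain) is also cleaner than the paper's, which merely asserts containment in a compact region. One small caveat applies to both treatments: the strict separation $L(v)<U(v)$ on $(0,2)$, needed for the boundary to be a \emph{simple} closed curve, is asserted by you and tacitly assumed by the paper; it is true but deserves a line, e.g.\ by comparing the attaining configurations, and the endpoint matching you compute (both curves meeting at $(0,0)$ and at $(2,s_\alpha(\tfrac12))$) is correct.
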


\begin{proof}
Assume first that $n\geq3$. By Theorem \ref{lowerupperthm} we know that $\Delta_{n}$ is
contained in the compact domain described. A continuous deformation of the lower
curve into the upper bounding curve (i.e. a homotopy  from the lower bounding
curve to the upper bounding curve ) is given by $P_{t}$, $Q_{t}$ for $t\in\lbrack0,1]$,
where
\beqrn
\begin{pmatrix}
P_{t}\\
Q_{t}
\end{pmatrix}
(v)&=& (1-t)
\begin{pmatrix}
\frac{2+v}{4} & \frac{2-v}{4} & 0 & \cdots & 0\\
\frac{2-v}{4} & \frac{2+v}{4} & 0 & \cdots & 0
\end{pmatrix}
+\\
&&t
\begin{pmatrix}
1-\frac{v}{2} & \frac{v}{2} & 0 &  0 & \cdots & 0\\
1-\frac{v}{2} & 0 & \frac{v}{2} &  0 & \cdots & 0
\end{pmatrix}
\eeqrn
for $v\in[0,2].$ Therefore, $\Delta_{n}$ has no \textquotedblleft
holes\textquotedblright. The case $n=2$ is handled in a similar way.
\end{proof}

\begin{figure}[h]
\begin{center}
\includegraphics[width=.8\columnwidth]{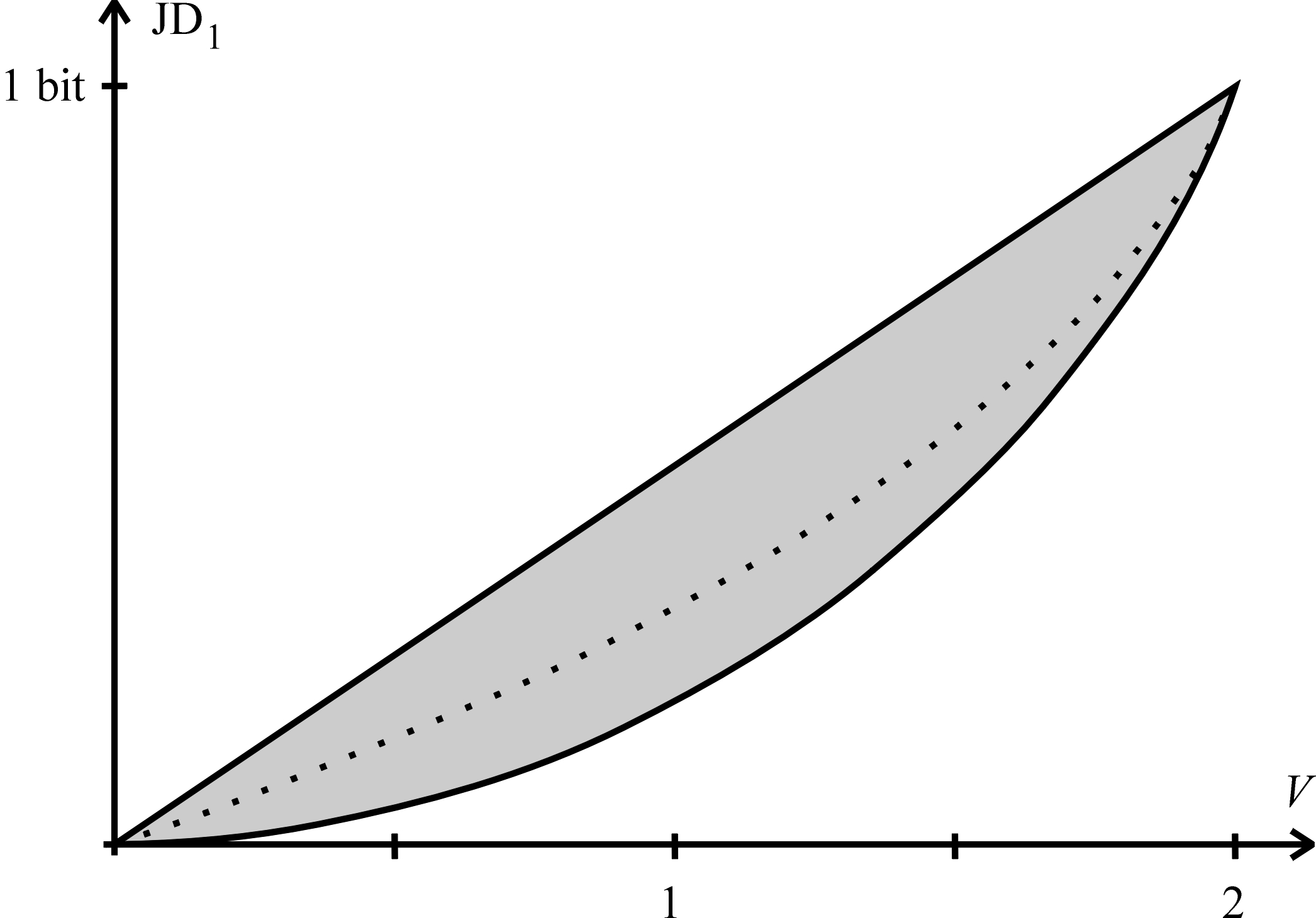}
\end{center}
\caption{$V/\JD_{\alpha}$-diagram for $\alpha=1$ and $n\geq3$ (the shaded
region), and for $n=2$ (the region obtained by replacing the upper bounding
curve by the dotted curve).}%
\label{vjsddiagram}%
\end{figure}

In Figure \ref{vjsddiagram} we have depicted the $V/\JD_{\alpha}$-diagram for
$\alpha=1$.

The bounds \eqref{eq:27} and \eqref{eq:28} give us the following proposition
regarding the topology induced by $(\JD_{\alpha})^{\frac12}$.
In the limiting case $\alpha\to1$,
this was proved in \cite{Topsoe00ine} by a different method.

\begin{proposition}
\label{theorem:3} The space $\left(  M_{+}^{1}(\mathbb{N}),\JD_{\alpha}%
^{1/2}\right)  $ is a complete, bounded metric space for $\alpha\in\left(
0,2\right]  $,  and the induced topology is that of convergence in total variation.
\end{proposition}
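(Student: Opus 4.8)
The plan is to reduce every assertion to the two-sided estimate of Theorem~\ref{lowerupperthm}, which sandwiches $\JD_\alpha$ between two functions of the total variation $v=V(P,Q)$ that both vanish at $v=0$. The metric property itself is already available: by Theorem~\ref{theorem:2} the space $\big(M_+^1(\mathbb{N}),\JD_\alpha^{1/2}\big)$ embeds isometrically in a real Hilbert space for $\alpha\in(0,2]$, and an isometric copy of a subset of a Hilbert space is a metric space; alternatively, the lower bound below already forces $\JD_\alpha(P,Q)=0\Rightarrow P=Q$. It thus remains to establish boundedness, equivalence of the induced topology with convergence in total variation, and completeness.

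For the topology I would first record that the lower bound \eqref{eq:27}, written as $L(v)=s_\alpha(1/2)-s_\alpha(1/2+v/4)$, is a continuous, strictly increasing function of $v$ with $L(0)=0$; this holds for every $\alpha\in(0,2]$ because $s_\alpha$ is concave and symmetric about $1/2$, so it decreases on $[1/2,1]$. Since $\JD_\alpha(P,Q)\geq L\big(V(P,Q)\big)$, convergence $\JD_\alpha(P_n,P)\to0$ forces $L\big(V(P_n,P)\big)\to0$ and hence, applying the continuous inverse $L^{-1}$ near $0$, $V(P_n,P)\to0$. This implication is unconditional. For the converse I would invoke the upper bound \eqref{eq:28}: since $\JD_\alpha(P,Q)\leq C_\alpha\|P-Q\|_\alpha^\alpha$ with $C_\alpha=\tfrac{1}{\alpha-1}\big(\tfrac12-2^{-\alpha}\big)>0$, once $\|P-Q\|_\alpha^\alpha\to0$ as $V(P,Q)\to0$ we obtain $V(P_n,P)\to0\Rightarrow\JD_\alpha(P_n,P)\to0$. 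The two implications together make the identity map a homeomorphism between $\big(M_+^1,V\big)$ and $\big(M_+^1,\JD_\alpha^{1/2}\big)$.

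Boundedness is immediate for $\alpha\in[1,2]$, where $\JD_\alpha(P,Q)\leq C_\alpha(2v)^\alpha$ and $v$ ranges over a bounded interval give a uniform bound; for $\alpha\in(0,1)$ it is subject to the same caveat as the converse topological implication discussed below. For completeness I would argue metrically rather than topologically: $(M_+^1,V)$ is complete, being a closed subset of $\ell_1$. Given a $\JD_\alpha^{1/2}$-Cauchy sequence $(P_n)$, strict monotonicity of $L$ together with the lower bound makes $(P_n)$ $V$-Cauchy, hence $V$-convergent to some $P\in M_+^1$; the upper bound then yields $\JD_\alpha^{1/2}(P_n,P)\to0$, so $(P_n)$ converges in $\JD_\alpha^{1/2}$ as well.

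The step I expect to be the real obstacle is the converse implication ``$V\to0\Rightarrow\JD_\alpha\to0$,'' since \eqref{eq:28} controls $\JD_\alpha$ by the $\ell_\alpha$ quasi-norm $\|P-Q\|_\alpha^\alpha$ rather than by $v$ directly. For $\alpha\in[1,2]$ this is harmless because $\|P-Q\|_\alpha^\alpha\leq\|P-Q\|_1^\alpha=(2v)^\alpha\to0$. For $\alpha\in(0,1)$, however, the quasi-norm is not controlled by $v$ uniformly over the support, and one must either work on a finite alphabet---where Hölder's inequality gives $\|P-Q\|_\alpha^\alpha\leq n^{1-\alpha}(2v)^\alpha$---or add a uniform-integrability hypothesis to recover the implication. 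By contrast the lower-bound half, and hence the inclusion of topologies showing that $\JD_\alpha^{1/2}$-convergence is at least as strong as total-variation convergence, is robust and needs no such restriction.
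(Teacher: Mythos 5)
Your argument is, for $\alpha\in[1,2]$, essentially the paper's own proof: the published argument consists precisely of the sandwich you describe, expanding the lower bound \eqref{eq:27} to obtain $\JD_{\alpha}(P,Q)\geq\tfrac{\alpha}{8}V^{2}(P,Q)$ and combining the upper bound \eqref{eq:28} with the estimate $\Vert P-Q\Vert_{\alpha}^{\alpha}\leq\Vert P-Q\Vert_{1}$ (valid for $\alpha\geq1$, since each $|p_{i}-q_{i}|\leq1$) to obtain $\JD_{\alpha}(P,Q)\leq\tfrac{\ln2}{2}V(P,Q)$; this is the chain \eqref{varbounds}. The paper leaves metricity (via Theorem~\ref{theorem:2}), completeness, and the homeomorphism claim implicit in these two inequalities; your explicit fleshing-out---strict monotonicity of $L$, completeness of $(M_{+}^{1}(\mathbb{N}),V)$ as a closed subset of $\ell_{1}$, transfer of Cauchy sequences through the bounds---is exactly the intended argument and is correct.

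The obstacle you flag for $\alpha\in(0,1)$ is genuine, and it is a defect of the paper rather than of your proof: the final inequality of \eqref{varbounds}, $\frac{1}{\alpha-1}\bigl(\frac12-2^{-\alpha}\bigr)\Vert P-Q\Vert_{\alpha}^{\alpha}\leq\frac{\ln2}{2}V(P,Q)$, is false for $\alpha<1$ (as is the first step $\frac18V^{2}\leq\frac{\alpha}{8}V^{2}$, though the usable bound $\frac{\alpha}{8}V^{2}\leq\JD_{\alpha}$ does hold for all $\alpha\in(0,2]$, which is why your ``unconditional'' direction is indeed unconditional). In fact the proposition itself fails on $M_{+}^{1}(\mathbb{N})$ for $\alpha\in(0,1)$, so no repair of the upper-bound half is possible: let $Q$ be the point mass at $1$ and let $P_{n,\delta}$ put mass $1-\delta$ at $1$ and mass $\delta/n$ at each of $2,\dots,n+1$, so that $v=\delta$. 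A direct computation gives
\begin{equation*}
\JD_{\alpha}(P_{n,\delta},Q)=\frac{2^{-\alpha}-\tfrac12}{1-\alpha}\,\delta^{\alpha}n^{1-\alpha}
+\frac{\bigl(1-\tfrac{\delta}{2}\bigr)^{\alpha}-\tfrac12(1-\delta)^{\alpha}-\tfrac12}{1-\alpha},
\end{equation*}
and since $2^{-\alpha}>\tfrac12$ for $\alpha<1$, fixing $\delta$ and letting $n\to\infty$ shows $\JD_{\alpha}$ is unbounded, while choosing $\delta=n^{-(1-\alpha)/\alpha}$ gives $V\to0$ with $\JD_{\alpha}$ bounded away from zero, so total-variation convergence does not imply $\JD_{\alpha}$-convergence. (One can even make $\JD_{\alpha}=+\infty$ for $\alpha<1$ by taking disjointly supported $P,Q$ with $\sum_{i}p_{i}^{\alpha}=\infty$.) So your instinct is right: the statement as proved holds for $\alpha\in[1,2]$ on $M_{+}^{1}(\mathbb{N})$, and for $\alpha\in(0,1)$ only under a restriction such as a fixed finite alphabet, where your H\"older estimate $\Vert P-Q\Vert_{\alpha}^{\alpha}\leq n^{1-\alpha}\Vert P-Q\Vert_{1}^{\alpha}$ restores both boundedness and the equivalence of topologies.
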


\begin{proof}By expansion of $L(P,Q)$ given by \eqref{eq:27}, in terms of the total variation $v$, one obtains the inequality
\begin{equation}
\JD_{\alpha}(P,Q)\geq\frac{1}{\alpha-1}\sum_{j=1}^{\infty}\binom{\alpha}{2j}\Big(\frac{v}{2}\Big)^{2j}.
\end{equation}
Taking only the first term and bounding \eqref{eq:28}, we get
\beqr
\frac{1}{8}V^2(P,Q) &\leq& \frac{\alpha}{8}V^{2}(P,Q)\nonumber\\
&\leq& \JD_{\alpha}(P,Q)\nonumber\\
&\leq& \frac{1}{\alpha-1}\big(\frac{1}{2} - \frac{1}{2^{\alpha}}\big)\|P-Q\|_{\alpha}^{\alpha}\nonumber\\
&\leq& \frac{\ln 2}{2}V(P,Q)\,.\qedhere\label{varbounds}
\eeqr
\end{proof}

\subsection{Bounds on $\QJD_{\alpha}$}

With Theorem \ref{lowerupperthm} we can bound $\QJD_{\alpha}$ for $\alpha
\in[1,2]$. We use the following two theorems.

\begin{theorem}
[\cite{Petz2008}, Theorem 3.9]\label{monotonicity} Let $\mathcal{H}$ be a
Hilbert space, $\rho_{1},\rho_{2}\in\mathcal{B}_{+}^{1}\left(  \mathcal{H}%
\right)  $ and $\mathcal{M}:=\{M_{i}\mid i=1,\dots,\,n\}$ be a measurement on
$\mathcal{H}$. Then $S(\rho_{1}\Vert\rho_{2})\geq D(P_{\mathcal{M}}\Vert
Q_{\mathcal{M}})$, where $P_{\mathcal{M}},Q_{\mathcal{M}}\in M_{+}^{1}(n)$ and
have point probabilities $P_{\mathcal{M}}(i)=\mbox{\rm Tr}(M_{i}\rho_{1})$ and
$Q_{\mathcal{M}}(i)=\mbox{\rm Tr}(M_{i}\rho_{2})$, respectively.
\end{theorem}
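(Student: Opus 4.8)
The plan is to view the measurement as a quantum channel onto a classical output system and then apply the data-processing inequality for quantum relative entropy. The whole argument rests on turning the two output distributions into commuting density matrices, for which the quantum and classical relative entropies coincide.

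First I would encode the output distributions as diagonal (hence mutually commuting) states. Fixing an orthonormal basis $\{\ket{i}\}_{i=1}^n$ of $\mathbb{C}^n$, I define the map
\[
\Phi(\rho)=\sum_{i=1}^n \Tr(M_i\rho)\,\ketbra{i}{i},
\]
so that $\Phi(\rho_1)$ and $\Phi(\rho_2)$ are the diagonal states whose entries are exactly $P_{\mathcal{M}}(i)=\Tr(M_i\rho_1)$ and $Q_{\mathcal{M}}(i)=\Tr(M_i\rho_2)$. I would then check that $\Phi$ is completely positive and trace-preserving: decomposing each positive operator into rank-one parts $M_i=\sum_k\ketbra{m_{ik}}{m_{ik}}$ and setting $K_{ik}=\ketbra{i}{m_{ik}}$ yields the Kraus form $\Phi(\rho)=\sum_{i,k}K_{ik}\rho K_{ik}^{\dagger}$, while $\sum_{i,k}K_{ik}^{\dagger}K_{ik}=\sum_i M_i=I$ gives trace preservation. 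Hence $\Phi$ is a CPTP map.

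Next I would invoke the monotonicity of quantum relative entropy under CPTP maps (the Lindblad--Uhlmann data-processing inequality), which states $S(\Phi(\rho_1)\Vert\Phi(\rho_2))\le S(\rho_1\Vert\rho_2)$. Finally, since $\Phi(\rho_1)$ and $\Phi(\rho_2)$ are simultaneously diagonal, their quantum relative entropy collapses to the classical Kullback divergence,
\[
S\big(\Phi(\rho_1)\Vert\Phi(\rho_2)\big)=\sum_{i=1}^n P_{\mathcal{M}}(i)\ln\frac{P_{\mathcal{M}}(i)}{Q_{\mathcal{M}}(i)}=D(P_{\mathcal{M}}\Vert Q_{\mathcal{M}}),
\]
and chaining the two relations gives $S(\rho_1\Vert\rho_2)\ge D(P_{\mathcal{M}}\Vert Q_{\mathcal{M}})$, as claimed.

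The main obstacle is the monotonicity step itself: the data-processing inequality for $S(\cdot\Vert\cdot)$ is a deep result, whose standard proofs reduce it, via the joint convexity of relative entropy, to Lieb's concavity theorem (equivalently, to operator convexity of $t\mapsto t\ln t$ together with properties of the relative modular operator). Since the statement is quoted directly from Petz, I would simply cite it; for a self-contained treatment one could instead apply a Naimark dilation to reduce a general POVM to a projective measurement and then handle the projective (pinching) case, where the inequality follows from the operator Jensen inequality.
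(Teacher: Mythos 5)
The paper does not prove this statement at all: it is imported verbatim as Theorem 3.9 of Petz's book, so there is no internal proof to compare against, and the real question is whether your reduction is sound. It is. Your measurement channel $\Phi(\rho)=\sum_{i=1}^{n}\Tr(M_i\rho)\,\ketbra{i}{i}$ is indeed CPTP: the Kraus form with $K_{ik}=\ketbra{i}{m_{ik}}$ reproduces $\Phi$, since $\sum_{i,k}K_{ik}\rho K_{ik}^{\dagger}=\sum_i\bigl(\sum_k\bra{m_{ik}}\rho\ket{m_{ik}}\bigr)\ketbra{i}{i}$, and $\sum_{i,k}K_{ik}^{\dagger}K_{ik}=\sum_i M_i=I$. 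The Lindblad--Uhlmann monotonicity then gives $S(\Phi(\rho_1)\Vert\Phi(\rho_2))\leq S(\rho_1\Vert\rho_2)$, and for commuting diagonal states the quantum relative entropy visibly collapses to the classical Kullback divergence, completing the chain. The one structural caveat, which you yourself flag, is that this is not a proof from first principles but a reduction of the measurement form of monotonicity to the channel form, and the channel form is essentially the content of the very theorem being cited; given that the paper itself treats the statement as a black box, that is an appropriate level of rigor, and your fallback route (Naimark dilation to a projective measurement, then the pinching case via the operator Jensen inequality) is a legitimate self-contained alternative. Two routine points deserve a remark if you write this out in full: the statement allows $\mathcal{H}$ to be infinite-dimensional, where the rank-one decomposition $M_i=\sum_k\ketbra{m_{ik}}{m_{ik}}$ needs the spectral theorem and possibly countably many terms (and the general monotonicity theorem must be invoked in its infinite-dimensional form); and when the support condition $\mathrm{supp}\,\rho_1\subseteq\mathrm{supp}\,\rho_2$ fails one has $S(\rho_1\Vert\rho_2)=+\infty$ by convention, so the inequality holds vacuously there.
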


\begin{theorem}
[\cite{Nielsen2000}, Theorem 9.1]\label{tracevariance} Let $\mathcal{H}$ be a
Hilbert space, 
\[
\rho_{1},\rho_{2}\in\mathcal{B}_{+}^{1}\left(  \mathcal{H}%
\right)  
\]
 and $\mathcal{M}:=\{M_{i}\mid i=1,\dots,\,n\}$ be a measurement on
$\mathcal{H}$. Then $\Vert\rho_{1}-\rho_{2}\Vert_{1}=\max_{\mathcal{M}%
}V(P_{\mathcal{M}},Q_{\mathcal{M}})$, where $P_{\mathcal{M}},Q_{\mathcal{M}%
}\in M_{+}^{1}(n)$ and have point probabilities $P_{\mathcal{M}}%
(i)=\mbox{\rm Tr}(M_{i}\rho_{1})$ and $Q_{\mathcal{M}}(i)=\mbox{\rm Tr}(M_{i}%
\rho_{2})$, respectively.
\end{theorem}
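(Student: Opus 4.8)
The plan is to reduce the statement to a spectral analysis of the single Hermitian operator $\Delta := \rho_1 - \rho_2$. First I would observe that $\Delta$ is Hermitian with $\Tr\Delta = 0$, so by the spectral theorem it admits a decomposition $\Delta = P - N$ into its positive and negative parts, where $P,N \geq 0$ have mutually orthogonal supports. Then $\lvert\Delta\rvert = P + N$, so that $\Vert\rho_1 - \rho_2\Vert_1 = \Tr P + \Tr N$, and the traceless condition $\Tr P = \Tr N$ follows immediately. This decomposition is the object around which the whole argument is organized: the upper bound and its tightness will both be read off from it.

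For the upper bound, I would fix an arbitrary measurement $\mathcal{M} = \{M_i\}$ and write $P_{\mathcal{M}}(i) - Q_{\mathcal{M}}(i) = \Tr(M_i\rho_1) - \Tr(M_i\rho_2) = \Tr(M_i\Delta) = \Tr(M_i P) - \Tr(M_i N)$. Since each $M_i \geq 0$ and $P,N \geq 0$, both $\Tr(M_i P)$ and $\Tr(M_i N)$ are nonnegative, so the triangle inequality gives $\lvert\Tr(M_i\Delta)\rvert \leq \Tr(M_i P) + \Tr(M_i N)$. Summing over $i$ and invoking the POVM completeness relation $\sum_i M_i = I$, I obtain
\[
\sum_i \lvert P_{\mathcal{M}}(i) - Q_{\mathcal{M}}(i)\rvert \leq \Tr P + \Tr N = \Vert\rho_1 - \rho_2\Vert_1,
\]
which is exactly the contractivity (data-processing) inequality for total variation under a measurement. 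Up to the normalization convention relating $\sum_i\lvert p_i - q_i\rvert$ to $V$, this yields $V(P_{\mathcal{M}},Q_{\mathcal{M}}) \leq \Vert\rho_1-\rho_2\Vert_1$ for every $\mathcal{M}$.

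To see that the supremum is actually attained, I would exhibit the explicit two-outcome projective measurement $\{\Pi_+, I - \Pi_+\}$, where $\Pi_+$ projects onto the support of $P$ (the span of the eigenvectors of $\Delta$ with nonnegative eigenvalue). Because $N$ is supported orthogonally to $P$, one has $\Tr(\Pi_+ N) = 0$ and therefore $\Tr(\Pi_+\Delta) = \Tr P$, while the complementary outcome gives $\Tr\big((I-\Pi_+)\Delta\big) = -\Tr N$. Summing the absolute values recovers $\Tr P + \Tr N = \Vert\rho_1-\rho_2\Vert_1$, so this particular measurement saturates the bound and the maximum over $\mathcal{M}$ equals the trace norm.

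The step I expect to demand the most care is verifying that equality holds precisely for this projective measurement: it hinges on the orthogonality of the supports of $P$ and $N$, which is what forces the cross-terms $\Tr(\Pi_+ N)$ and $\Tr\big((I-\Pi_+)P\big)$ to vanish and thereby collapses the triangle inequality used in the upper bound into an equality. Everything else is routine bookkeeping with the spectral decomposition of $\Delta$ and the relation $\sum_i M_i = I$; the only other point to be explicit about is the normalization factor linking $\sum_i\lvert p_i - q_i\rvert$ to the total variation $V$ as defined above.
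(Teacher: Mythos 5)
Your proposal is correct, and there is nothing in the paper to compare it against: the paper states this result without proof, citing it as Theorem 9.1 of Nielsen and Chuang \cite{Nielsen2000}. Your argument---the spectral splitting $\rho_1-\rho_2=P-N$ into positive parts with orthogonal supports, the POVM completeness relation $\sum_i M_i = I$ for the upper bound, and the two-outcome projective measurement $\{\Pi_+, I-\Pi_+\}$ onto the support of $P$ for attainment---is precisely the standard proof given in that reference, and with the conventions actually used in this paper ($\Vert\rho_1-\rho_2\Vert_1=\Tr\vert\rho_1-\rho_2\vert$ and $V(P,Q)=\sum_i\vert p_i-q_i\vert$, both ranging over $[0,2]$) the equality holds exactly, with no extra normalization factor needed.
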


\begin{theorem}
For $\alpha\in(0,2]$, for all states $\rho_{1},\rho_{2}\in
\mathcal{B}_{+}^{1}\left(  \mathcal{H}\right)  $, we have
\begin{align*}
s_{\alpha}(\tfrac{1}{2})-s_{\alpha}\left(  \frac{1}{2}+\frac{\Vert\rho
_{1}-\rho_{2}\Vert_{1}}{2}\right)   &  \leq\QJD_{\alpha}(\rho_{1},\rho_{2})\\
&  \leq\frac{\ln2}{2}\Vert\rho_{1}-\rho_{2}\Vert_{1}.
\end{align*}

\end{theorem}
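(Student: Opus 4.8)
The plan is to bracket $\QJD_{\alpha}(\rho_{1},\rho_{2})$ between the two classical bounds of Theorem \ref{lowerupperthm}, evaluated at the trace distance, by transferring the problem to a classical one through the measurement theorems \ref{monotonicity} and \ref{tracevariance}. Throughout I write $\bar\rho=\tfrac12(\rho_{1}+\rho_{2})$ and $T=\Vert\rho_{1}-\rho_{2}\Vert_{1}$, and I work with the trace-power form $\QJD_{\alpha}=\tfrac{1}{\alpha-1}\big(\tfrac12\Tr\rho_{1}^{\alpha}+\tfrac12\Tr\rho_{2}^{\alpha}-\Tr\bar\rho^{\alpha}\big)$. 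I would first prove both bounds for $\alpha\in[1,2]$, where Theorem \ref{lowerupperthm} and the relevant monotonicity are cleanest; the case $\alpha=1$ is the limit in which $\QJD_{\alpha}\to\tfrac12 S(\rho_{1}\Vert\bar\rho)+\tfrac12 S(\rho_{2}\Vert\bar\rho)$, $\alpha=2$ is elementary as in Lemma \ref{qjsdnsdlem}, and the residual range $\alpha\in(0,1)$ follows by continuity of $L$ and $U$ in $\alpha$.

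For the lower bound I would start from Theorem \ref{tracevariance}: choose a measurement $\mathcal{M}$ attaining $V(P_{\mathcal{M}},Q_{\mathcal{M}})=T$, with $P_{\mathcal{M}}(i)=\Tr(M_{i}\rho_{1})$ and $Q_{\mathcal{M}}(i)=\Tr(M_{i}\rho_{2})$; note that measuring $\bar\rho$ produces $\tfrac12(P_{\mathcal{M}}+Q_{\mathcal{M}})$. Since $\QJD_{\alpha}$ is the order-$\alpha$ mutual information of the classical--quantum pair and measuring the quantum part is a channel, data processing gives $\QJD_{\alpha}(\rho_{1},\rho_{2})\ge\JD_{\alpha}(P_{\mathcal{M}},Q_{\mathcal{M}})$ (for $\alpha=1$ this is exactly Theorem \ref{monotonicity} applied to each relative-entropy term, and for $\alpha\in(1,2]$ it is the analogous monotonicity of the order-$\alpha$ divergence). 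The classical lower bound $L$ then gives $\JD_{\alpha}(P_{\mathcal{M}},Q_{\mathcal{M}})\ge s_{\alpha}(\tfrac12)-s_{\alpha}\big(\tfrac12+\tfrac14 V(P_{\mathcal{M}},Q_{\mathcal{M}})\big)$, and because $s_{\alpha}$ is decreasing on $[\tfrac12,1]$ this is nondecreasing in the total variation; substituting $V(P_{\mathcal{M}},Q_{\mathcal{M}})=T$ closes the lower bound.

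The upper bound is the hard part, and the same mechanism does \emph{not} work: every channel can only \emph{decrease} $\QJD_{\alpha}$, so data processing yields lower, not upper, estimates, and one checks on non-commuting qubits that $\QJD_{\alpha}(\rho_{1},\rho_{2})$ can strictly exceed $\JD_{\alpha}(P_{\mathcal{M}},Q_{\mathcal{M}})$ for \emph{every} measured pair of the same trace distance. I would therefore prove $\QJD_{\alpha}(\rho_{1},\rho_{2})\le\tfrac{\ln2}{2}T$ directly. The coarse ingredient is $S(\rho_{i}\Vert\bar\rho)\le\ln2$, which follows from $\bar\rho\ge\tfrac12\rho_{i}$ and operator monotonicity of the logarithm (so $\ln\rho_{i}-\ln\bar\rho\le(\ln2)I$) and already gives $\QJD\le\ln2$. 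To extract the factor $T$ I would apply the matrix form of Lemma \ref{powerrep} to write $\QJD_{\alpha}$ as a positive multiple of $\int_{0}^{\infty}t^{-(\alpha+1)}N(t)\,dt$, where $N(t)=\tfrac12\Tr e^{-t\rho_{1}}+\tfrac12\Tr e^{-t\rho_{2}}-\Tr e^{-t\bar\rho}\ge0$ is the convexity gap of $A\mapsto\Tr e^{A}$ (the linear terms cancelling since $\Tr\rho_{1}=\Tr\rho_{2}=1$), and then estimate $N(t)$ in terms of $\Vert\rho_{1}-\rho_{2}\Vert_{1}$.

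The main obstacle, on which I expect the real work to concentrate, is extracting from this integral precisely the linear dependence $\tfrac{\ln2}{2}T$ with the sharp, $\alpha$-independent constant: $N(t)$ is quadratic in $t$ for small $t$ but saturates for large $t$, and matching these regimes against the weight $t^{-(\alpha+1)}$ so that the Gamma-function prefactor collapses to $\tfrac{\ln2}{2}$ is delicate. As a fallback I would exploit convexity of $s\mapsto\QJD_{\alpha}\big(\bar\rho+s(\rho_{1}-\bar\rho),\bar\rho-s(\rho_{1}-\bar\rho)\big)$, which vanishes at $s=0$, to reduce the estimate to boundary states of strictly smaller rank, where an explicit two-level computation of the type used in the counterexample section applies and an induction on rank can be set up; at $\alpha=1$ one may additionally use the variational identity $\QJD=\min_{\sigma}\big(\tfrac12 S(\rho_{1}\Vert\sigma)+\tfrac12 S(\rho_{2}\Vert\sigma)\big)$ to supply convenient comparison states.
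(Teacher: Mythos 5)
Your lower bound follows the paper's own route almost verbatim: pick the measurement $\mathcal{M}$ achieving $V(P_{\mathcal{M}},Q_{\mathcal{M}})=\Vert\rho_1-\rho_2\Vert_1$ (Theorem \ref{tracevariance}), push through monotonicity under measurement, and invoke the classical bound $L$ of Theorem \ref{lowerupperthm}; you are in fact more careful than the paper here, whose displayed chain is literally an identity only at $\alpha=1$. But your plan to recover $\alpha\in(0,1)$ ``by continuity'' is not a valid inference: continuity of $L$ and $U$ in $\alpha$ lets you pass to limit points of the region where the inequality is already established (e.g.\ $\alpha\to1^{+}$), not to an open interval disjoint from it. Worse, the target is false there: for $\rho_1,\rho_2$ maximally mixed on orthogonal $d$-dimensional subspaces one has $\Vert\rho_1-\rho_2\Vert_1=2$ while $\QJD_\alpha(\rho_1,\rho_2)=d^{1-\alpha}(2^{1-\alpha}-1)/(1-\alpha)\to\infty$ as $d\to\infty$ for $\alpha<1$. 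This is consistent with the sentence introducing the theorem in the paper, which restricts to $\alpha\in[1,2]$ despite the theorem's stated range.

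The genuine gap is the upper bound, which you explicitly leave open. Your observation that channels only \emph{decrease} $\QJD_\alpha$ is correct but is precisely the hint you did not follow: apply data processing in the \emph{reverse} direction, realizing $(\rho_1,\rho_2)$ as the image under a channel of a pair with the same trace distance and computable divergence. The paper adjoins a three-dimensional register $\mathcal{G}$ and, with $\delta_1=\frac{\rho_1-\rho_2+|\rho_1-\rho_2|}{2}$, $\delta_2=\frac{\rho_2-\rho_1+|\rho_1-\rho_2|}{2}$, sets
\begin{align*}
\tilde\rho_1 &= \frac{\rho_1+\rho_2-|\rho_1-\rho_2|}{2}\otimes\ketbra{1}{1}+\delta_1\otimes\ketbra{2}{2},\\
\tilde\rho_2 &= \frac{\rho_1+\rho_2-|\rho_1-\rho_2|}{2}\otimes\ketbra{1}{1}+\delta_2\otimes\ketbra{3}{3},
\end{align*}
the exact quantum copy of the classical lift $\widetilde{P},\widetilde{Q}$ on $[n]\times[3]$ used to prove $U_n$ in Theorem \ref{lowerupperthm} --- an analogy you had in hand and could have mirrored. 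Then $\Tr_{\mathcal{G}}\tilde\rho_i=\rho_i$, the orthogonal flags give $\Vert\tilde\rho_1-\tilde\rho_2\Vert_1=\Vert\rho_1-\rho_2\Vert_1$, the block structure makes the lifted divergence computable exactly ($\frac{\ln2}{2}\Vert\rho_1-\rho_2\Vert_1$ at $\alpha=1$; for $\alpha\in(1,2]$ it equals $\frac{1/2-2^{-\alpha}}{\alpha-1}\left(\Tr\delta_1^{\alpha}+\Tr\delta_2^{\alpha}\right)\le\frac{\ln2}{2}\Vert\rho_1-\rho_2\Vert_1$, using $\Tr\delta_i^{\alpha}\le\Tr\delta_i$ and monotonicity of the prefactor), and the data processing inequality \cite{Petz2008} under the partial trace finishes the proof. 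By contrast, neither of your substitutes closes: your coarse bound $S(\rho_i\Vert\bar\rho)\le\ln2$ produces no factor of the trace distance, extracting the sharp $\alpha$-uniform constant $\frac{\ln2}{2}$ from the Gamma-weighted integral of $N(t)$ is essentially equivalent to the theorem itself, and the rank-induction fallback is only a sketch. One caveat worth recording even for the paper's argument: positivity of the common block $\frac{\rho_1+\rho_2-|\rho_1-\rho_2|}{2}$ is automatic when $\rho_1,\rho_2$ commute but requires justification in general, so any complete write-up along these lines must address it.
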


\begin{proof}
The lower bound is proved in the same way as \cite[Theorem III.1]%
{KNTZ:interaction}, by making a reduction to the case of classical probability
distributions by means of measurements. Let $\mathcal{M}$ be a measurement
that maximizes $V(P_{\mathcal{M}},Q_{\mathcal{M}})$. Then from Theorem
\ref{tracevariance} we have $\Vert\rho_1-\rho_2\Vert_{1}=V(P_{\mathcal{M}%
},Q_{\mathcal{M}})$. Theorem \ref{monotonicity} gives us
\beqrn
\QJD_{\alpha}(\rho_1,\rho_2) &\geq& \frac{1}{2}D\left(  P_{\mathcal{M}}\big\|\frac{P_{\mathcal{M}}+Q_{\mathcal{M}}}{2}\right)\\
&& +\frac{1}{2}D\left(Q_{\mathcal{M}}\big\|\frac{P_{\mathcal{M}}+Q_{\mathcal{M}}}{2}\right)\\
&=& \JD_{\alpha}(P_{\mathcal{M}},Q_{\mathcal{M}}).
\eeqrn
The result now follows from Theorem \ref{lowerupperthm}.
The upper bound is proved the same way as we proved the classical bound.
Introduce a 3-dimensional Hilbert space $\mathcal{G}$ with basis vectors
$\ket{1}$, $\ket{2}$ and $\ket{3}$. On $\mathcal{H\otimes G}$ define the density matrices%
\beqrn
\tilde{\rho}_1  &=& \frac{\rho_1 + \rho_2 - |\rho_1 - \rho_2| }{2}\otimes\ketbra{1}{1}\\
&& +\frac{\rho_1 - \rho_2 + |\rho_1 - \rho_2| }{2}\otimes\ketbra{2}{2} ,\\
\tilde{\rho}_2  &=& \frac{\rho_2 + \rho_1 - |\rho_2 - \rho_1|}{2}\otimes\ketbra{1}{1}\\
&& +\frac{\rho_2 - \rho_1 + |\rho_1 - \rho_2| }{2}\otimes\ketbra{3}{3} .
\eeqrn
Let $\Tr_{\mathcal{G}}$ denote the partial trace $\mathcal{B}_{+}^{1}(\mathcal{H}%
\otimes\mathcal{G})\rightarrow\mathcal{B}_{+}^{1}(\mathcal{H}).$ Then
$\Tr_{\mathcal{G}}\left(  \tilde{\rho}_1\right)  =\rho_1$ and $\Tr_{\mathcal{G}}\left(
\tilde{\rho}_2\right)=\rho_2  .$ The matrices $\frac{\rho_1 - \rho_2 + |\rho_1 - \rho_2|}{2}$ and $\frac{\rho_2 - \rho_1 + |\rho_1 - \rho_2|}{2}$ are positive definite so%
\beqrn
\Vert\tilde{\rho}_1-\tilde{\rho}_2\Vert_{1}  &=& \Tr\left\vert \frac{\rho_1 - \rho_2 + |\rho_1 - \rho_2| }{2}\otimes\ketbra{2}{2}\right.\\
&& \left.-\frac{\rho_2 - \rho_1 + |\rho_1 - \rho_2|}{2}\otimes\ketbra{3}{3} \right\vert \\
&=& \Tr\left(  \frac{\rho_1 - \rho_2 + |\rho_1 - \rho_2| }{2}\right)\\
&& +\Tr\left(\frac{\rho_2 - \rho_1 + |\rho_1 - \rho_2| }{2}\right)  \\
&=& \Tr\left\vert \rho_1-\rho_2\right\vert =\Vert\rho_1-\rho_2\Vert_{1}.
\eeqrn
According to the ``quantum data processing inequality'' \cite[Theorem 3.10]{Petz2008} we have%
\begin{multline*}
\QJD_{\alpha}(\rho_1,\rho_2)  \leq \QJD_{1}(\tilde{\rho}_1,\tilde{\rho}_2)\\
= \frac{1}{2}\Tr\left(  \frac{\rho_1 - \rho_2 + |\rho_1 - \rho_2|
}{2}\otimes\ketbra{2}{2}\right)\ln2\\
 + \Tr\left(\frac{\rho_2 - \rho_1 + |\rho_1 - \rho_2|
}{2} \otimes\ketbra{3}{3}\right)
\ln2\\
= \frac{\ln2}{2}\cdot\Vert\rho_1-\rho_2\Vert_{1}\,.\qedhere
\end{multline*}
\end{proof}

\section{Conclusions and open problems}

We studied generalizations of the (general) Jensen divergence and its
quantum analogue. For $\alpha\in\left(  1,2\right]  $, $\JD_{\alpha}$ was
proved to be the square of a metric which can be embedded in a real Hilbert
space. The same was shown to hold for $\QJD_{\alpha}$ restricted to qubit
states or to pure states. Both these results were derived by evoking a theorem
of Schoenberg's and showing that these quantities are negative definite.

Whether $(\QJD_{1})^{\frac12}$ is a metric for all mixed states remains unknown. However,
based on a large amount of numerical evidence, we conjecture the function
$A\rightarrow\mbox{\rm Tr}(e^{A})$ to be exponentially convex for density
matrices $A$. Proving this would imply that $\QJD_{\alpha}$ is negative
definite for $\alpha\in(0,2]$, and hence the square of a metric that
can be embedded in a real Hilbert space.

\section{Acknowledgements}
We are greatly indebted to Flemming Tops{\o}e. This work mainly extends his basic result jointly with Bent Fuglede presented at the conference ISIT 2004 \cite{Fuglede2004}, where you find the basic result on isometric embedding in Hilbert space related to $\JD_1$. Flemming has supplied us with many valuable comments and suggestions. In particular Section \ref{SecUpDown} is to a large extent inspired by unpublished results of Flemming.

Jop Bri{\"e}t is partially supported by a Vici grant from the Netherlands Organization for
Scientific Research (NWO), and by the European Commission under the Integrated
Project Qubit Applications (QAP) funded by the IST directorate as Contract
Number 015848. Peter Harremo{\"e}s has been supported by the Villum Kann Rasmussen Foundation,
by Danish Natural Science Research Council, by INTAS (project 00-738) and by
the European Pascal Network. 

% \bibliographystyle{unsrt}
% \bibliography{database}

\end{document}